\newtheorem{lem}{Lemma}
\newtheorem{theorem}{Theorem}
\newtheorem{define}{Definition}
\newtheorem{remark}{Remark}
\newcolumntype{L}[1]{>{\raggedright\arraybackslash}p{#1}}
\newcolumntype{C}[1]{>{\centering\arraybackslash}p{#1}}
\newcolumntype{R}[1]{>{\raggedleft\arraybackslash}p{#1}}
\newcommand{\multiline}[1]{%
	\begin{tabularx}{\dimexpr\linewidth-\ALG@thistlm}[t]{@{}X@{}}
		#1
	\end{tabularx}
}
\newlength{\figwidth}
\begin{document}
	
	\setlength{\pdfpagewidth}{8.5in}
	\setlength{\pdfpageheight}{11in}
\title{\LARGE Compute-Forward Multiple Access for Gaussian Fast Fading Channels}


\author{Lanwei Zhang,~\IEEEmembership{Student Member,~IEEE,}
        Jamie Evans,~\IEEEmembership{Senior Member,~IEEE,}
        and~Jingge Zhu,~\IEEEmembership{Member,~IEEE.}
}

\maketitle

\begin{abstract}
Compute-forward multiple access (CFMA) is a transmission strategy which allows the receiver in a multiple access channel (MAC) to first decode linear combinations of the transmitted signals and then solve for individual messages. Compared to existing MAC strategies such as joint decoding or successive interference cancellation (SIC), CFMA was shown to achieve the MAC capacity region for fixed channels under certain signal-to-noise (SNR) conditions without time-sharing using only single-user decoders. This paper studies the CFMA scheme for a two-user Gaussian fast fading MAC with channel state information only available at the receiver (CSIR). We develop appropriate lattice decoding schemes for the fading MAC and derive the achievable rate pairs for decoding linear combinations of codewords with any integer coefficients. We give a sufficient and necessary condition under which the proposed scheme can achieve the ergodic sum capacity. Furthermore, we investigate the impact of channel statistics on the capacity achievability of the CFMA scheme. In general, the sum capacity is achievable if the channel variance is small compared to the mean value of the channel strengths. Various numerical results are presented to illustrate the theoretical findings.
\end{abstract}

\begin{IEEEkeywords}
Compute-forward multiple access, Fast fading, Ergodic capacity.
\end{IEEEkeywords}

\section{Introduction}
Due to the broadcast nature of the medium, interference from other transmitters can be problematic for a receiver to recover the desired message in wireless communications. To solve this, traditional orthogonal transmission schemes allocate the transmission resources orthogonally to every transmitter to avoid interference. For example, in the time-division multiple access (TDMA) scheme each transmitter has its specialised time slot during which other transmitters will be silent. However, orthogonal schemes suffer from a diminishing rate, especially when many transmitters send messages simultaneously \cite{Tse05fundamentalsof}. 

Compute-and-forward (CF) is a non-orthogonal linear physical-layer network coding scheme proposed in \cite{Nazer11CF}, which allows the receiver to recover linear combinations of the messages from multiple transmitters. Unlike traditional schemes, which avoid interference or treat interference as noise, CF exploits interference in the decoding process by decoding linear combinations of codewords. Then the receiver can either forward the linear combination to the next receiver, thus working as a relay, or recover enough linear combinations to solve for individual messages. The CF scheme has been applied to different AWGN network models and results in good performance that cannot be achieved otherwise\cite{Nazer11CF,Zhu17,nazer16expand,Hong13,Ordentlich17,Zhan14,He18,Lyu19}. The key to the CF scheme is the adoption of nested lattice codes, which guarantee that any integer linear combination of the codewords is still a codeword. Due to the favourable algebraic property of lattice codes, they have been shown to approach or achieve the capacity in many cases\cite{Erez04,Ordentlich15,Hindy15,Campello16}. 

Compute-forward multiple access (CFMA) is a generalized CF scheme proposed in \cite{Zhu17}, which allows the users to have different rates whereas each user shares the same rate in the original CF. It shows that in a two-user Gaussian MAC with fixed channel gains, the CFMA scheme can achieve the entire capacity region without time sharing or rate splitting if the received SNR is large enough. It is not a straightforward task to extend the CFMA scheme to the fading case. CFMA and the original compute-and-forward scheme use lattice decoding (quantization with respect to lattice) which relies on the algebraic structure of lattice codes. The first observation in the fading case is that since each symbol in a codeword experiences a possibly different channel gain, the algebraic structure of lattice codes will be corrupted from the decoder's perspective, so it is not immediately clear how to extend the lattice decoder to the fading case. A second difference is that when applying CFMA to the fixed channel case, the lattice codes should be scaled appropriately according to the channel gains in order to achieve the entire capacity region. In the fading case with CSIR only, this is not possible as instantaneous channel realizations are not available at transmitters. As we will show later, it is still necessary to scale lattice codes in an appropriate way to achieve the capacity region, and this scaling can be made only depending on the statistics of the channel, which are assumed to be known at the transmitters. 

\subsection{Related work}

The original CF scheme is proposed in \cite{Nazer11CF} for AWGN networks with fixed channel gains. The achievable rates are proved to be optimal in the large signal-to-noise ratio (SNR) regime. The original CF scheme is expanded in \cite{nazer16expand,Zhu17}, which allows unequal rates among the transmitters. In both papers, the authors show that their proposed low-complexity scheme can achieve the channel capacity for the Gaussian multiple-access channel (MAC) given high SNR. Apart from the Gaussian MAC, CF is also applied to the random access Gaussian channel with a relatively large number of active users\cite{Ordentlich17}. It is shown that the energy-per-bit required by each user is significantly smaller than the popular solutions such as treating interference as noise. Moreover, the reverse-CF scheme has been proposed for the Gaussian broadcast channel as the dual of the MAC\cite{Hong13,He18}. In this scheme, a precoding process is required at the base station. Since the receiver will recover the integer linear combination of the transmitted signals, all the individual messages can be recovered through enough independent combinations. Following this idea, the integer forcing linear receiver has been proposed in \cite{Zhan14}. This linear receiver has low complexity and significantly outperforms conventional linear architectures such as zero forcing. It can also apply to the multiple-input multiple-output (MIMO) channel with no coding across transmit antennas while still achieving the optimal diversity-multiplexing tradeoff. 

The CF scheme is based on nested lattice codes. Lattice coding and decoding have been used and shown to provide good performance in many AWGN networks. It has been shown in\cite{Erez04} that they achieve the capacity of the point-to-point AWGN channel. The class of LAttice Space–Time (LAST) codes has been proposed in \cite{Gamal04} and is shown to achieve the optimal diversity–multiplexing tradeoff of multiple-input multiple-output (MIMO) channels. In \cite{Ordentlich15}, the authors show that with a precoded integer-forcing scheme, which is also based on nested lattice codes, they can achieve the optimal MIMO capacity to within a constant gap. Apart from these, there is also some literature discussing how lattice codes can be applied in the fading channel. In \cite{BechlawiRayleigh16}, Reed-Muller lattice coding is proposed for the Rayleigh block fading channel with an outstanding achievable rate. Lattice coding has also been applied for Rician fading channels in \cite{AlexRician17} and for fading dirty paper channels in \cite{AhmedDPC16}, where in both cases it gives better performance than classical coding schemes. More importantly, lattice codes can achieve ergodic capacity in fading channels. Algebraic lattices are applied to achieve the capacity of the ergodic fading channel in \cite{Campello16}. Similar results come from \cite{Hindy15} where the ergodic capacity is achieved with lattice codes. 

However, most of the literature considers the point-to-point fading channel case. When considering the fading MAC case, \cite{Hindy15} studies a two-user case with successive interference cancellation (SIC) decoding, which shows that with lattice codes the ergodic capacity is achievable. It is worth noting that SIC is a special case of CF. There are a few results regarding CF in fading channels. The outage performance of CF under slow fading is discussed in \cite{Nazer11CF} while only some numerical examples are provided there. Algebraic lattices are applied to block-fading channels in \cite{Lyu19} showing that they achieve better rates than the original CF scheme. Different to these prior works, the current paper focuses on the capacity achievability of the CFMA scheme. 

\subsection{Paper contributions}

In this paper, we investigate the CFMA scheme for the Gaussian fast fading multiple access channel with CSIR only. The main contributions of this paper are as follows.
\begin{itemize}
    \item CFMA is extended to fading channels with new technical contributions including the analysis of the ambiguity decoding of lattice codes in the CFMA scheme. The new result subsumes known results for CFMA for fixed channels and SIC decoding with lattice codes for the Gaussian MAC.
    \item We identify the channel conditions in which the capacity can be achieved. It is shown that under appropriate channel conditions, the entire ergodic capacity is achievable. A key to the result is a scaling parameter in the code construction that should be chosen judiciously based on the channel statistics.
    \item We investigate the impact of channel statistics. It is found that in general large mean and small variance of the channel gains benefit the capacity achievability. The findings are illustrated by the numerical examples.
\end{itemize}

\section{System model \& Coding scheme}\label{SM}

We consider a two-user Gaussian multiple access channel (MAC) with real-valued channel gains. We will consider a fast fading channel, where each codeword will see multiple channel realizations. For $n$ channel uses, the channel output at the receiver will be given by 
\begin{equation*}
    \textbf{y} = \sum_{l=1}^2 \textbf{H}_l \textbf{x}_l + \textbf{z},
\end{equation*}
where $\textbf{x}_l,\textbf{y}\in\mathbb{R}^n$ are the channel input of user $l$ and the channel output, respectively. The channel noise $\textbf{z}$ is assumed to be white Gaussian with zero mean and unit variance. The channel gains $\textbf{H}_l\in\mathbb{R}^{n\times n}$ are diagonal with diagonal entries representing the channel gains at each channel use. Note that $n$ is also the codeword length. The average input power constraint is assumed to be $P$ for each user. Note that if unequal transmission power is applied for each user, i.e., $P_l$ for user $l$, we can still keep this assumption by introducing the effective channel matrices $\textbf{H}_l' = P_l \textbf{H}_l/P$.

\subsection{Encoding}
The codebook of user $l$ is constructed by $n$-dimensional nested lattices denoted by
\begin{equation}\label{eq_codebook_MIMO}
    \mathcal{C}_l = \Lambda_l^F\cap \mathcal{V}_l^C,
\end{equation}
where the fine lattice $\Lambda_l^F$ is chosen to be \textit{good for AWGN channel coding} in the sense of \cite{Erez05}, and $\mathcal{V}_l^C$ is the Voronoi region of the coarse lattice $\Lambda_l^C$, which is chosen to be \textit{good for quantization}  in the sense of \cite{Erez05} with second moment
\begin{equation}\label{eq_second_moment_MIMO}
    \sigma^2(\Lambda_l^C) = \frac{1}{n\text{Vol}(\mathcal{V}_l^C)}\int_{\mathcal{V}_l^C} ||\textbf{x}||^2d\textbf{x} = \beta_l^2.
\end{equation}
With this codebook, the message rate of user $l$ is given by
\begin{equation}\label{eq_message_rate_MIMO}
    r_l = \frac{\log|\mathcal{C}_l|}{n} = \frac{1}{n}\log\frac{\text{Vol}(\mathcal{V}_l^C)}{\text{Vol}(\mathcal{V}_l^F)}.
\end{equation}
We introduce the dither vector $\textbf{d}_l$ that is uniformly distributed on the Voronoi region of $\Lambda_l^C / \beta_l$. The dithered unit-power codeword of $\textbf{t}_l\in \mathcal{C}_l$ is generated as
\begin{equation}
    \label{eq_c_l}
    \begin{split}
        \textbf{c}_l & = [\textbf{t}_l/\beta_l+\textbf{d}_l]\mod \Lambda_l^C/\beta_l \\
        & = (\textbf{t}_l/\beta_l+\textbf{d}_l) - \mathcal{Q}_{\Lambda_l^C /\beta_l}(\textbf{t}_l/\beta_l+\textbf{d}_l),
    \end{split}
\end{equation}
where $\mathcal{Q}_\Lambda$ is the quantization operation over the lattice $\Lambda$. Readers are referred to \cite{Nazer11CF,Zhu17} for more details about nested lattice codes. The channel input of user $l$ is then given by
\begin{equation}\label{eq_channel_input}
    \textbf{x}_l = \sqrt{P} \textbf{c}_l.
\end{equation}

\subsection{Decoding}
In the CFMA scheme, the receiver will decode enough integer linear combinations of the lattice codewords to recover the individual messages. In the two-user case, two linearly independent combinations are required and each of them is decoded with a properly chosen equalization matrix. When multiplied by the equalization matrix, the channel output can be rewritten as a summation of the desired linear combination and the effective noise. Since the decoded linear combinations are still lattice codewords, we can use a single-user lattice decoder for each linear combination. In this paper, we apply the ambiguity lattice decoder of \cite{Loeliger97}, which is defined in Definition~\ref{def_ambuiguity_lattice_decoder}. The detailed decoding process can be found in the proof of Theorem~\ref{thm_achievable_rate}.
\begin{define}[Ambiguity lattice decoder]\label{def_ambuiguity_lattice_decoder}
    For a $n$-dimensional lattice $\Lambda\subset \mathbb{R}^n$, the ambiguity lattice decoder with a decision region $\mathcal{E}\in\mathbb{R}^n$ gives the result $\textbf{u}\in\mathbb{R}^n$ if the obtained signal  $\textbf{r}\in\Lambda$ can be written in the form $\textbf{r} = \textbf{u} + \textbf{z}$ with a unique lattice codeword $\textbf{u}\in\Lambda$ and the noise $\textbf{z}\in \mathcal{E}$. Otherwise it produces an ambiguity error.
\end{define}

\section{Achievable rates}
In this section, we will first look at a specifically characterized noisy output $\textbf{r}\in\mathbb{R}^n$ in Lemma~\ref{lem_rate_linear_combination} and show the existence of a lattice codebook construction with certain message rates such that the the error probability of decoding a lattice codeword $\textbf{u}\in\Lambda_F$ from $\textbf{r}$ could be arbitrarily small. Here $\Lambda_F$ stands for the finer lattice of $\Lambda_1^F$ and $\Lambda_2^F$. With this auxiliary result, we will derive the expression of the achievable rate pair to decode individual messages. Note that $|M|$ is used to represent the determinant of matrix $M$.

\begin{lem}\label{lem_rate_linear_combination}
    Let $\textbf{r} = \textbf{u} + \textbf{A}\textbf{z} + \sum_{l=1}^2 \textbf{B}_l \textbf{c}_l$ where $\textbf{u}\in\Lambda_F$ is the lattice codeword, $\textbf{z}\in\mathcal{N}(\textbf{0},\textbf{I}_n)$ is the channel noise, $\textbf{c}_l$ is defined in (\ref{eq_c_l}) and $\textbf{A}, \textbf{B}_l$ are non-zero square matrices. There exists a lattice codebook construction in the form of (\ref{eq_codebook_MIMO}) with the rate defined in (\ref{eq_message_rate_MIMO}) satisfying
    \begin{equation}\label{eq_achievable_rate_linear_combination}
        r_l < \lim_{n\rightarrow \infty}\; \max_{\textbf{A},\textbf{B}_l}\; \frac{1}{2n} \log^+ \frac{\beta_l^{2n}}{\left|\textbf{A}\textbf{A}^T + \sum_{l=1}^2 \textbf{B}_l\textbf{B}_l^T\right|},
    \end{equation}
    such that with the ambiguity lattice decoder in Definition~\ref{def_ambuiguity_lattice_decoder}, the decoding error could be arbitrarily small. 
\end{lem}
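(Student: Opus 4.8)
The plan is to recast the decoding of $\textbf{u}$ as a point-to-point lattice decoding problem against the effective noise $\textbf{n} := \textbf{A}\textbf{z} + \sum_{l=1}^2 \textbf{B}_l\textbf{c}_l$, and then to analyse the ambiguity decoder of Definition~\ref{def_ambuiguity_lattice_decoder} on the finer lattice $\Lambda_F$ of $\Lambda_1^F$ and $\Lambda_2^F$. First I would invoke the Crypto Lemma: since each $\textbf{c}_l$ is dithered as in (\ref{eq_c_l}), it is uniformly distributed over the Voronoi region of $\Lambda_l^C/\beta_l$ and independent of the messages. Because $\Lambda_l^C$ is good for quantization with second moment $\beta_l^2$, the scaled codeword $\textbf{c}_l$ has unit per-dimension second moment and asymptotically isotropic covariance, so with $\textbf{z}$ and the $\textbf{c}_l$ mutually independent the effective noise $\textbf{n}$ has covariance approaching $\boldsymbol{\Sigma} := \textbf{A}\textbf{A}^T + \sum_{l=1}^2 \textbf{B}_l\textbf{B}_l^T$.

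I would fix the decision region of the ambiguity decoder to be the covariance-matched ellipsoid $\mathcal{E} = \{\textbf{x}\in\mathbb{R}^n : \textbf{x}^T\boldsymbol{\Sigma}^{-1}\textbf{x}\le n(1+\delta)\}$ for a small $\delta>0$, whose volume is of order $(2\pi e(1+\delta))^{n/2}|\boldsymbol{\Sigma}|^{1/2}$. The decoding error then splits into two events: (i) the effective noise escapes $\mathcal{E}$; and (ii) an ambiguity occurs, meaning some competing lattice point $\textbf{u}'\neq\textbf{u}$ also satisfies $\textbf{r}-\textbf{u}'\in\mathcal{E}$. Event (i) is handled by concentration: $\E[\textbf{n}^T\boldsymbol{\Sigma}^{-1}\textbf{n}]\to n$, and since both the Gaussian part and the lattice-uniform part concentrate around their means, $\Pr[\textbf{n}\notin\mathcal{E}]\to 0$. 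For event (ii) I would average over a Minkowski--Hlawka ensemble of lattices, as in the averaging bound of \cite{Loeliger97}, under which the expected number of competitors lying in $\mathcal{E}$ is at most $\text{Vol}(\mathcal{E})/\text{Vol}(\mathcal{V}_F)$. This quantity vanishes once $\text{Vol}(\mathcal{V}_F)$ exceeds $\text{Vol}(\mathcal{E})$, and a standard averaging/expurgation step then produces a single lattice construction for which both error events are simultaneously small.

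The final step translates the resulting decodability condition $\text{Vol}(\mathcal{V}_F)\gtrsim (2\pi e)^{n/2}|\boldsymbol{\Sigma}|^{1/2}$ into the claimed per-user rate. Goodness for quantization of $\Lambda_l^C$ together with the second moment (\ref{eq_second_moment_MIMO}) gives $\text{Vol}(\mathcal{V}_l^C)^{2/n}\to 2\pi e\,\beta_l^2$; substituting this and the decoding threshold into the rate expression (\ref{eq_message_rate_MIMO}), with the fine lattices chosen commensurate with $\Lambda_F$, yields $r_l < \frac{1}{2n}\log\frac{\beta_l^{2n}}{|\boldsymbol{\Sigma}|}$. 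The operator $\log^+$ appears because a non-positive right-hand side forces the trivial rate $r_l=0$, while the maximisation over $(\textbf{A},\textbf{B}_l)$ and the limit $n\to\infty$ follow upon letting $\delta\to 0$ and optimising the equalisation.

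I expect the main obstacle to be the non-Gaussian character of the effective noise: the residual self-interference terms $\textbf{B}_l\textbf{c}_l$ are uniform over scaled Voronoi regions rather than Gaussian, so neither Poltyrev's theorem nor off-the-shelf Gaussian tail bounds apply directly. Overcoming this requires exploiting goodness for quantization twice --- once to pin the covariance of each $\textbf{c}_l$ to $\textbf{I}_n$, and once to force the concentration of $\textbf{n}^T\boldsymbol{\Sigma}^{-1}\textbf{n}$ onto the shell bounding $\mathcal{E}$ --- while simultaneously demanding that $\Lambda_F$ be good for AWGN against a general \emph{anisotropic} covariance $\boldsymbol{\Sigma}$. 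Reconciling these simultaneous goodness properties within one random ensemble, and absorbing the anisotropy cleanly into the determinant $|\boldsymbol{\Sigma}|$ through the covariance-matched ellipsoid (equivalently a whitening transform $\boldsymbol{\Sigma}^{-1/2}$), is the delicate part of the argument.
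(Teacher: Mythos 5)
Your architecture coincides with the paper's: the whitened ellipsoidal decision region $\{\textbf{x}:\textbf{x}^T\boldsymbol{\Sigma}^{-1}\textbf{x}\le n(1+\eta)\}$, the split into a noise-escape event and an ambiguity event, the Loeliger-type averaging bound $\text{Vol}(\mathcal{E})/\text{Vol}(\mathcal{V}_F)$ for the ambiguity probability, and the volume bookkeeping (using $\text{Vol}(\mathcal{V}_C)\rightarrow(2\pi e)^{n/2}$) that turns the decodability threshold into $r_l<\frac{1}{2n}\log\left(\beta_l^{2n}/|\boldsymbol{\Sigma}|\right)$ are all exactly the paper's steps. The genuine gap is your treatment of the noise-escape event, which you dispose of by asserting that ``both the Gaussian part and the lattice-uniform part concentrate around their means.'' This is the technical crux of the lemma, and it is not a routine concentration statement: each $\textbf{c}_l$ is uniform over a Voronoi region, hence a dependent, non-product, non-Gaussian vector, and the quadratic form $\bar{\textbf{z}}^T\boldsymbol{\Sigma}^{-1}\bar{\textbf{z}}$ mixes it with the Gaussian part through cross terms. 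Even the mean is problematic: the scalar second-moment condition $\sigma^2(\Lambda_C)=1$ fixes $\text{tr}\,\mathbb{E}[\textbf{c}_l\textbf{c}_l^T]=n$ but not $\mathbb{E}[\textbf{c}_l\textbf{c}_l^T]=\textbf{I}_n$, so the weighted trace $\text{tr}\left(\textbf{B}_l^T\boldsymbol{\Sigma}^{-1}\textbf{B}_l\,\mathbb{E}[\textbf{c}_l\textbf{c}_l^T]\right)$ entering $\mathbb{E}[\bar{\textbf{z}}^T\boldsymbol{\Sigma}^{-1}\bar{\textbf{z}}]$ need not reduce to $\text{tr}(\textbf{B}_l^T\boldsymbol{\Sigma}^{-1}\textbf{B}_l)$; your appeal to ``asymptotically isotropic covariance'' is itself an unproven claim. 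You correctly flag this as the delicate part, but ``exploiting goodness for quantization twice'' is not a mechanism.

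The mechanism the paper uses, and which your proof needs, is Gaussian domination rather than concentration: by \cite[Lemma 11]{Erez04} and \cite{Gamal04}, the density of $\textbf{c}_l$ satisfies $f_{\textbf{c}_l}(\textbf{x})\le\left(r_c(\Lambda_C)/r_e(\Lambda_C)\right)^n e^{o(n)} f_{\textbf{e}_l}(\textbf{x})$ with $\textbf{e}_l\sim\mathcal{N}(\textbf{0},\sigma^2\textbf{I}_n)$ and $\sigma^2=r_c^2(\Lambda_C)/n>1$, and the Gaussian component $\textbf{z}$ is made noisier by adding an independent $\mathcal{N}(\textbf{0},(\sigma^2-1)\textbf{I}_n)$ vector. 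After this replacement the whitened effective noise is exactly $\mathcal{N}(\textbf{0},\sigma^2\textbf{I}_n)$, a chi-square Chernoff bound gives an escape probability of $\exp\left(-\frac{n}{2}(\zeta-\ln\zeta-1)\right)$ with $\zeta=(1+\eta)/\sigma^2$, and goodness for covering supplies both $\sigma^2\rightarrow 1$ (so that $\zeta>1$ for any fixed $\eta>0$) and the subexponential growth of the domination cost $\left(r_c/r_e\right)^n e^{o(n)}$, so the exponential decay survives. This sidesteps both the possible anisotropy of $\mathbb{E}[\textbf{c}_l\textbf{c}_l^T]$ and the need for any concentration inequality for uniform-over-Voronoi vectors. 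Without this step (or a proved substitute for it), your handling of the noise-escape event does not go through, while the rest of your argument matches the paper.
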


\begin{proof}
    For brevity, we let $\boldsymbol{\Sigma} = \textbf{A}\textbf{A}^T + \sum_{l=1}^2 \textbf{B}_l\textbf{B}_l^T$. In the proof, $Pr(E)$ is used to represent the probability of the event $E$. Recall that we are considering an ensemble of $n$-dimensional nested lattices $\{\Lambda_l^C \subseteq \Lambda_l^F\}$ with the message rate given by (\ref{eq_message_rate_MIMO}). We let $\Lambda_C = \Lambda_l^C / \beta_l$ thus $\sigma^2(\Lambda_C) = 1$ according to (\ref{eq_second_moment_MIMO}). Note that $\textbf{c}_l$ is uniformly distributed on $\mathcal{V}_l^C/\beta_l$ by definition. We further let $\Lambda_F = \Lambda_l^F$ for $l=1,2$ and denote its Voronoi region as $\mathcal{V}_F$. For this lattice $\Lambda_F$, we apply the ambiguity lattice decoder with the decision region
    \begin{equation}
        \label{eq_decision_region}
        \mathcal{E}_{n,\eta} = \{\textbf{z}\in\mathbb{R}^{n}: |\textbf{Q}\textbf{z}|^2\leq n(1+\eta)\}
    \end{equation}
    for some $\eta>0$ where $\textbf{Q}\in\mathbb{R}^{n\times n}$ satisfies $\textbf{Q}^T\textbf{Q} = \boldsymbol{\Sigma}^{-1}$. The decomposition of $\boldsymbol{\Sigma}^{-1}$ is always possible since $\boldsymbol{\Sigma}$ is positive-definite. We rewritten $\textbf{r}$ as $\textbf{r} = \bar{\textbf{z}} + \textbf{u}$ with the effective noise $\bar{\textbf{z}} = \textbf{A}\textbf{z} + \sum_{l=1}^L \textbf{B}_l \textbf{c}_l$. By definition, $\textbf{u}$, $\textbf{z}$ and $\textbf{c}_l$ are independent of each other. From Definition~\ref{def_ambuiguity_lattice_decoder}, the decoding error event $E_\mathcal{D}$ happens when the effective noise $\bar{\textbf{z}}$ is outside the decision region $\mathcal{E}_{n,\eta}$ or $\bar{\textbf{z}}\in\mathcal{E}_{n,\eta}$ but an ambiguity event occurs. The ambiguity event $E_\mathcal{A}$ is defined by $\textbf{r}$ belonging to $(\mathcal{E}_{n,\eta} + \textbf{u}_1) \cap (\mathcal{E}_{n,\eta} + \textbf{u}_2)$ for some pair of distinct lattice points $\textbf{u}_1,\textbf{u}_2\in\Lambda_F$. Note that the probability of an ambiguity does not depend on the codeword $\textbf{u}$.
    By the union bound, the error probability of decoding $\textbf{u}$ from $\textbf{r}$ for a given $\Lambda_F$ is upper-bounded by
    \begin{equation}
        \label{eq_error_prob}
        P_e: = Pr(E_\mathcal{D}) \leq Pr(\bar{\textbf{z}} \notin \mathcal{E}_{n,\eta}) + Pr(E_\mathcal{A}|\bar{\textbf{z}} \in \mathcal{E}_{n,\eta}).
    \end{equation}
    We will then take the expectation over the ensemble of random lattices. Since $Pr(\bar{\textbf{z}} \notin \mathcal{E}_{n,\eta})$ is independent of $\Lambda_F$, we can use the ``random coding" theorem of \cite[Theorem 4]{Loeliger97} to upper bound the average decoding error probability as
    \begin{equation}
        \label{eq_average_error_prob}
        \bar P_e \leq Pr(\bar{\textbf{z}} \notin \mathcal{E}_{n,\eta}) + (1+\delta)\frac{\text{Vol}(\mathcal{E}_{n,\eta})}{\text{Vol}(\mathcal{V}_F)},
    \end{equation}
    for any $\delta>0$. Let $Pr(\bar{\mathcal{A}}) = (1+\delta)\frac{\text{Vol}(\mathcal{E}_{n,\eta})}{\text{Vol}(\mathcal{V}_F)}$. The term $\text{Vol}(\mathcal{E}_{n,\eta})$ refers to the volume of the decision region which can be represented by
    \begin{equation}
        \label{eq_vol_decision_region}
        \text{Vol}(\mathcal{E}_{n,\eta}) = (1+\eta)^{n/2}|\textbf{Q}^T\textbf{Q}|^{-1/2}\text{Vol}(\mathcal{B}(\sqrt{n})),
    \end{equation}
    where $\text{Vol}(\mathcal{B}(\sqrt{n}))$ is the volume of a $n$-dimensional sphere of radius $\sqrt{n}$. From (\ref{eq_message_rate_MIMO}), we know 
    \begin{equation}
        \label{eq_vol_F}
        \text{Vol}(\mathcal{V}_F) = \text{Vol}(\mathcal{V}_l^C)\cdot2^{-n r_l} = \beta_l^{n}\text{Vol}(\mathcal{V}_C)\cdot2^{-n r_l}.
    \end{equation}
    By combining (\ref{eq_vol_decision_region}) and (\ref{eq_vol_F}), we have
    \begin{equation}
        \label{eq_ambiguity_prob_rate}
        Pr(\bar{\mathcal{A}}) \leq (1+\delta)\cdot 2^{-n\left[\frac{1}{2n}\log\frac{\beta_l^{2n}}{|\boldsymbol{\Sigma}|}-r_l-\eta'\right]}
    \end{equation}
    where 
    \begin{equation}
        \label{eq_eta_prime}
        \eta' = \frac{1}{2}\log(1+\eta)+\frac{1}{n}\log \frac{\text{Vol}(\mathcal{B}(\sqrt{n}))}{\text{Vol}(\mathcal{V}_C)}.
    \end{equation}
    As $n\rightarrow\infty$, we have $G(\Lambda_C)\rightarrow \frac{1}{2\pi e}$ due to the fact that the constructed coarse lattices are good for covering, where $G(\Lambda_C)$ is the normalized second-order moment of $\Lambda_C$. Given $\text{Vol}(\mathcal{V}_C) = \left(\frac{\sigma^2(\Lambda_C)}{G(\Lambda_C)}\right)^{n/2}$, we have $\text{Vol}(\mathcal{V}_C) \rightarrow (2\pi e)^{n/2}$ when $n\rightarrow\infty$. By using the fact
    \begin{equation}
        \label{eq_vol_ball_inf}
        \text{Vol}(\mathcal{B}(\sqrt{n})) \rightarrow \frac{(2\pi e)^ {n/2}}{\sqrt{n\pi}} \text{ as } n\rightarrow\infty, 
    \end{equation}
    we have when $n\rightarrow\infty$,
    \begin{equation}
        \label{eq_vol_fraction_inf}
        \frac{1}{n}\log \frac{\text{Vol}(\mathcal{B}(\sqrt{n}))}{\text{Vol}(\mathcal{V}_C)} \rightarrow -\log (\sqrt{n\pi})^{1/n} \rightarrow 0.
    \end{equation}
    The second right arrow of (\ref{eq_vol_fraction_inf}) comes from the fact that $\lim_{x\rightarrow\infty}(x)^{\lambda/x}=1,\forall\lambda>0$ when $x = n\pi$ and $\lambda = \frac{\pi}{2}$. Based on (\ref{eq_vol_fraction_inf}) and the fact that $\eta>0$ is arbitrary, $\eta'$ can be made arbitrarily small for large enough $n$. Therefore, we can conclude that for arbitrary $\epsilon_1 >0$, $Pr(\bar{\mathcal{A}}) \leq \epsilon_1/2$ for sufficiently large $n$ given that 
    \begin{equation}
        \label{eq_rate_ambiguity}
        r_l < \frac{1}{2n}\log\frac{\beta_l^{2n}}{|\boldsymbol{\Sigma}|}.
    \end{equation}
    The proof will then be complete if we can show that $Pr(\bar{\textbf{z}} \notin \mathcal{E}_{n,\eta})\leq \epsilon_2/2$ for arbitrary $\eta, \epsilon_2>0$. Recall that 
    \begin{equation}
        \label{eq_wrong_decode_error}
        Pr(\bar{\textbf{z}} \notin \mathcal{E}_{n,\eta}) = Pr(|\textbf{Q}\bar{\textbf{z}}|^2 > n(1+\eta)),
    \end{equation}
    where $\bar{\textbf{z}} = \textbf{A}\textbf{z} + \sum_{l=1}^2 \textbf{B}_l \textbf{c}_l$ is the effective noise with $\textbf{z}\sim \mathcal{N}(\textbf{0},\textbf{I}_{n})$ and $\textbf{c}_l\sim \text{Uniform}(\mathcal{V}_C)$ for $l=1,2$. Note that $\textbf{z}$ and $\textbf{c}_l$'s are statistically independent. To upper-bound this probability, we will consider a ``noisier" system with higher noise variance. We first add a Gaussian vector $\textbf{e}_3\sim \mathcal{N}(\textbf{0},(\sigma^2-1)\textbf{I}_{n})$ to $\textbf{z}$ to make it noisier, where 
    \begin{equation}
        \sigma^2 = \frac{r_{c}(\Lambda_C)^2}{n}.
    \end{equation}
    The term $r_{c}(\Lambda_C)$ is the covering radius of $\Lambda_C$. From \cite{Gamal04}, we know $\sigma^2>1$. As $\sigma^2-1>0$, the noise $\textbf{e}_3$ is well-defined. We then replace $\textbf{c}_l$ with Gaussian vector $\textbf{e}_l\sim\mathcal{N}(\textbf{0},\sigma^2\textbf{I}_{n})$. Recall $\sigma^2(\Lambda_C) =1$, thus $\textbf{e}_l$ has a larger variance. The considered noise is then replaced by
    \begin{equation}
        \label{eq_noisier_noise}
        \bar{\textbf{z}}' = \textbf{A}(\textbf{z}+\textbf{e}_3) + \sum_{l=1}^2 \textbf{B}_l\textbf{e}_l.
    \end{equation}
    Let $f_{\textbf{c}_l}(\cdot)$ and $f_{\textbf{e}_l}(\cdot)$ denote the PDF of $\textbf{c}_l$ and $\textbf{e}_l$, respectively. Following the argument of \cite[Lemma 11]{Erez04} and \cite{Gamal04}, we obtain
    \begin{equation}
        \label{eq_pdf_upper_bound}
        f_{\textbf{c}_l}(\textbf{z}) \leq \left(\frac{r_{c}(\Lambda_C)}{r_{e}(\Lambda_C)}\right)^{n} \exp(o(n)) f_{\textbf{e}_l}(\textbf{z}),
    \end{equation}
    where $r_{e}(\Lambda_C)$ is the effective radius of $\Lambda_C$. When $n\rightarrow\infty$, $r_{c}(\Lambda_C)/r_{e}(\Lambda_C)\rightarrow 1$ from the goodness for covering. From the noisier construction, the RHS of (\ref{eq_wrong_decode_error}) can be upper-bounded by
    \begin{equation}
        \label{eq_wrong_decode_error_upper_bound}
        \begin{split}
            Pr\left(|\textbf{Q}\bar{\textbf{z}}|^2 > n(1+\eta)\right) \leq \left[\left(\frac{r_{c}(\Lambda_C)}{r_{e}(\Lambda_C)}\right)^{n} \exp(o(n))\right]^L  Pr(|\textbf{Q}\bar{\textbf{z}}'|^2 \geq n(1+\eta)).
        \end{split}
    \end{equation}
    Note that $\textbf{Q}\bar{\textbf{z}}'\sim \mathcal{N}(\textbf{0},\sigma^2\textbf{I}_{n})$. Thus, $|\textbf{Q}\bar{\textbf{z}}'/\sigma|^2\sim \chi^2(n)$. Similar to the approach taken in \cite{Gamal04}, we use the Chernoff bound to upper-bound $Pr(|\textbf{Q}\bar{\textbf{z}}'|^2 > n(1+\eta))$, which gives
    \begin{equation}
        \label{eq_wrong_decode_error_chernoff_bound}
        Pr(|\textbf{Q}\bar{\textbf{z}}'|^2 \geq n(1+\eta)) \leq \min_{\alpha > 0}\; e^{-\alpha n(1+\eta)} \mathbb{E}[e^{\alpha|\textbf{Q}\bar{\textbf{z}}'|^2}],
    \end{equation}
    where $\mathbb{E}[e^{\alpha|\textbf{Q}\bar{\textbf{z}}'|^2}] = \exp(-\frac{n}{2}\ln(1-2\alpha \sigma^2))$. Therefore,
    \begin{equation}
        \label{eq_wrong_decode_error_chernoff_bound_min}
        \begin{split}
            Pr(|\textbf{Q}\bar{\textbf{z}}'|^2 \geq n(1+\eta)) & \leq \min_{\alpha > 0}\; e^{-\frac{n}{2}[2\alpha (1+\eta)+\ln(1-2\alpha \sigma^2))]}\\
            & = \exp\left(-\frac{n}{2}(\zeta - \ln \zeta - 1)\right),
        \end{split}
    \end{equation}
    where $\zeta = \frac{1+\eta}{\sigma^2}$. It is worth noting that $\sigma^2\rightarrow 1$ when $n\rightarrow\infty$. For arbitrary $\eta >0$, we have $\zeta > 1$ which leads to $\zeta - \ln \zeta - 1 >0$. We can then conclude that $Pr(\bar{\textbf{z}} \notin \mathcal{E}_{n,\eta})\leq \epsilon_2/2$ for arbitrary $\eta, \epsilon_2>0$ and sufficiently large $n$. Then the proof is complete.
\end{proof}

Now we present the achievable rate pair for decoding individual messages in Theorem~\ref{thm_achievable_rate}, where we decode two linear combinations with coefficients $\textbf{a}$ and $\textbf{b}$, respectively.
\begin{theorem}\label{thm_achievable_rate}
For a two-user Gaussian fast fading MAC, the CFMA scheme gives the following achievable rate pair for decoding individual messages
\begin{equation}\label{eq_achievable_rate_ergodic_MAC_CFMA}
    R_l = \left\{\begin{array}{ll}
        r_l(\textbf{a},\boldsymbol{\beta}), & b_l = 0 \\
        r_l(\textbf{b}|\textbf{a},\boldsymbol{\beta}), & a_l = 0 \\
        \min\{r_l(\textbf{a},\boldsymbol{\beta}),\;r_l(\textbf{b}|\textbf{a},\boldsymbol{\beta})\}, & \text{otherwise}
    \end{array}\right.
\end{equation}
for any linearly independent $\textbf{a},\textbf{b}\in\mathbb{Z}^2$ and $\boldsymbol{\beta}\in\mathbb{R}^2$ if $r_l(\textbf{a},\boldsymbol{\beta})\geq 0$ and $r_l(\textbf{b}|\textbf{a},\boldsymbol{\beta}) \geq 0$ for $l = 1,2$, where
\begin{align}
    r_l(\textbf{a},\boldsymbol{\beta}) = & \frac{1}{2} \mathbb{E}_{\textbf{h}} \left[\log 
    \frac{\beta_l^{2} \left(1 + P h_{1}^2 + P h_{2}^2\right)}
    {M(\boldsymbol{\beta},\textbf{h})}\right] \label{eq_achievable_rate_MIMO_first}\\
    r_l(\textbf{b}|\textbf{a},\boldsymbol{\beta}) = & \frac{1}{2} \mathbb{E}_{\textbf{h}} \left[ \log \frac{\beta_l^{2} { M(\boldsymbol{\beta},\textbf{h}) }}
    {(\tilde a_1 \tilde b_2 - \tilde a_2 \tilde b_1)^{2}} \right]\label{eq_achievable_rate_MIMO_second}
\end{align}
with $\tilde a_l = a_l \beta_l, \tilde b_l = b_l \beta_l, \textbf{h} = (h_{1},h_{2})^T$ and 
\begin{equation}
    \label{eq_Mh}
    M(\boldsymbol{\beta},\textbf{h}) = (\tilde a_1^2 + \tilde a_2^2) + P (\tilde a_1 h_{2} - \tilde a_2 h_{1})^2.
\end{equation}
\end{theorem}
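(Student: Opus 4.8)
The plan is to realize the two-step CFMA decoding by casting each step into the form of Lemma~\ref{lem_rate_linear_combination} and then optimizing the equalization. For the first combination with coefficient $\textbf{a}$, the receiver applies a diagonal equalizer $\textbf{F}_1$ (diagonal because $\textbf{H}_l$ is diagonal and the processing is per symbol) and, after removing the dithers, forms $\textbf{F}_1\textbf{y}$. Writing $\textbf{y}=\sqrt{P}\,\textbf{H}_1\textbf{c}_1+\sqrt{P}\,\textbf{H}_2\textbf{c}_2+\textbf{z}$ and targeting the combination $\textbf{u}_1=\sum_{l}\tilde a_l\textbf{c}_l$ (which, after the dither and modulo-lattice reduction, is a codeword of $\Lambda_F$), I would write
\begin{equation*}
\textbf{F}_1\textbf{y}=\textbf{u}_1+\textbf{F}_1\textbf{z}+\sum_{l=1}^2\bigl(\sqrt{P}\,\textbf{F}_1\textbf{H}_l-\tilde a_l\textbf{I}\bigr)\textbf{c}_l,
\end{equation*}
which matches the hypothesis of Lemma~\ref{lem_rate_linear_combination} with $\textbf{A}=\textbf{F}_1$ and $\textbf{B}_l=\sqrt{P}\,\textbf{F}_1\textbf{H}_l-\tilde a_l\textbf{I}$. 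All these matrices are diagonal, so $\boldsymbol{\Sigma}=\textbf{A}\textbf{A}^T+\sum_l\textbf{B}_l\textbf{B}_l^T$ is diagonal and its determinant factors over the channel uses.

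Because $|\boldsymbol{\Sigma}|=\prod_{i=1}^n\Sigma_i$ with $\Sigma_i=f_i^2+\sum_l(\sqrt{P}f_ih_{l,i}-\tilde a_l)^2$, the bound of Lemma~\ref{lem_rate_linear_combination} becomes $\frac{1}{2n}\sum_i\log^+\frac{\beta_l^2}{\Sigma_i}$, and the per-symbol terms can be optimized independently over the CSIR-based equalizer value $f_i=f(\textbf{h}_i)$. I would carry out this single-variable minimization as a projection: with $\textbf{p}=(1,\sqrt{P}h_1,\sqrt{P}h_2)^T$ and $\tilde{\textbf{a}}=(0,\tilde a_1,\tilde a_2)^T$ we have $\Sigma=\|f\textbf{p}-\tilde{\textbf{a}}\|^2$, whose minimum over $f$ is the Gram ratio $(\|\textbf{p}\|^2\|\tilde{\textbf{a}}\|^2-(\textbf{p}\cdot\tilde{\textbf{a}})^2)/\|\textbf{p}\|^2$; a short computation simplifies the numerator to exactly $M(\boldsymbol{\beta},\textbf{h})$ in (\ref{eq_Mh}), giving $\Sigma_{\min}=M(\boldsymbol{\beta},\textbf{h})/(1+Ph_1^2+Ph_2^2)$. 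Since the fading is fast, the $\{\textbf{h}_i\}$ are i.i.d.\ and so are the summands, so the strong law of large numbers turns $\frac{1}{n}\sum_i$ into $\mathbb{E}_{\textbf{h}}[\cdot]$ as $n\to\infty$, and the non-negativity hypothesis lets me drop the $(\cdot)^+$. This yields exactly $r_l(\textbf{a},\boldsymbol{\beta})$ in (\ref{eq_achievable_rate_MIMO_first}).

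For the second combination with coefficient $\textbf{b}$, the already-decoded $\textbf{u}_1$ is available as side information, so I would allow a second free diagonal matrix and form $\textbf{F}_2\textbf{y}+\textbf{G}\textbf{u}_1$, matching Lemma~\ref{lem_rate_linear_combination} with $\textbf{A}=\textbf{F}_2$ and $\textbf{B}_l=\sqrt{P}\,\textbf{F}_2\textbf{H}_l+\tilde a_l\textbf{G}-\tilde b_l\textbf{I}$ so that the signal part is $\textbf{u}_2=\sum_l\tilde b_l\textbf{c}_l$. The per-symbol cost is now $\|f\textbf{p}+g\tilde{\textbf{a}}-\tilde{\textbf{b}}\|^2$ with $\tilde{\textbf{b}}=(0,\tilde b_1,\tilde b_2)^T$, i.e.\ the squared distance from $\tilde{\textbf{b}}$ to $\mathrm{span}(\textbf{p},\tilde{\textbf{a}})$. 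By the Gram-determinant distance formula this minimum equals $\det\textbf{G}_3/\det\textbf{G}_2$, where $\textbf{G}_2$ is the Gram matrix of $(\textbf{p},\tilde{\textbf{a}})$ with $\det\textbf{G}_2=M(\boldsymbol{\beta},\textbf{h})$ as above, and $\textbf{G}_3$ is the Gram matrix of $(\textbf{p},\tilde{\textbf{a}},\tilde{\textbf{b}})$ whose determinant is $(\det[\textbf{p}\,|\,\tilde{\textbf{a}}\,|\,\tilde{\textbf{b}}])^2=(\tilde a_1\tilde b_2-\tilde a_2\tilde b_1)^2$. Hence the minimal effective noise is $(\tilde a_1\tilde b_2-\tilde a_2\tilde b_1)^2/M(\boldsymbol{\beta},\textbf{h})$, and the same ergodic passage gives $r_l(\textbf{b}\,|\,\textbf{a},\boldsymbol{\beta})$ in (\ref{eq_achievable_rate_MIMO_second}).

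Finally, linear independence of $\textbf{a},\textbf{b}$ makes $\tilde a_1\tilde b_2-\tilde a_2\tilde b_1=\beta_1\beta_2(a_1b_2-a_2b_1)\neq 0$, so the $2\times2$ system recovers $\textbf{c}_1,\textbf{c}_2$, and hence the individual messages, from $\textbf{u}_1,\textbf{u}_2$. To assemble (\ref{eq_achievable_rate_ergodic_MAC_CFMA}) I would note that each decoding step only constrains the users that actually appear in its combination: user $l$ is constrained by the first step through $r_l(\textbf{a},\boldsymbol{\beta})$ and by the second through $r_l(\textbf{b}\,|\,\textbf{a},\boldsymbol{\beta})$. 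If $b_l=0$ the user is absent from $\textbf{u}_2$ and only the first bound applies; if $a_l=0$ only the second applies; otherwise both must hold and the rate is their minimum. I expect the main obstacle to be not the minimizations (which close cleanly via the Gram determinants) but the bookkeeping that legitimizes the setup: writing the dithered, modulo-reduced received signal exactly in the additive form required by Lemma~\ref{lem_rate_linear_combination}, treating $\textbf{u}_1$ as genuine side information in the second step, and rigorously justifying the interchange of the per-symbol optimization, the $n\to\infty$ limit, and the law-of-large-numbers passage to $\mathbb{E}_{\textbf{h}}$.
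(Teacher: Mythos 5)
Your proposal is correct and follows essentially the same route as the paper's proof: both decode the two combinations sequentially by casting each step into Lemma~\ref{lem_rate_linear_combination} with an MMSE-style equalizer plus (for the second step) a side-information matrix, then pass from $\frac{1}{2n}\sum_i\log(\cdot)$ to $\frac{1}{2}\mathbb{E}_{\textbf{h}}[\log(\cdot)]$ via the law of large numbers; your a priori restriction to diagonal equalizers loses nothing since the paper's optimizers $\textbf{W}^*$, $\textbf{F}^*$, $\textbf{L}^*$ turn out diagonal anyway. The only difference is in execution: you carry out the per-symbol quadratic minimization geometrically via projections and Gram determinants, whereas the paper completes the square in matrix form and invokes the matrix determinant lemma and the Woodbury identity, and the two computations yield identical expressions.
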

\begin{proof}
To decode both individual messages, two linearly independent combinations are required, whose coefficients are denoted by $\textbf{a}$ and $\textbf{b}$, respectively. Let $\textbf{h}_i = (h_{1i},h_{2i})$ denote the channel gain for the $i$-th channel use. When decoding the first linear combination with coefficients $\textbf{a}$, the receiver computes
\begin{align*}
    \textbf{y}_1^\prime = & \textbf{W} \textbf{y} -\sum_{l=1}^2 a_l\beta_l \textbf{d}_l\\
    = & \textbf{W}(\sum_{l=1}^2 \sqrt{P} {\textbf{H}}_l \textbf{c}_l + \textbf{z})-\sum_{l=1}^2 a_l\beta_l \textbf{d}_l -\sum_{l=1}^2 a_l\beta_l \textbf{c}_l+\underbrace{\sum_{l=1}^2 a_l\beta_l \textbf{c}_l}_{\text{use } (\ref{eq_c_l})}\\
    = & \underbrace{\textbf{W}\textbf{z} + \sum_{l=1}^2 (\tilde a_l\textbf{I}_{n}-\sqrt{P}\textbf{W}{\textbf{H}}_l)(-\textbf{c}_l)}_{\bar{\textbf{z}}_1}-\sum_{l=1}^2 a_l\beta_l \textbf{d}_l\\
    & + \sum_{l=1}^2 a_l\beta_l(\textbf{t}_l/\beta_l+\textbf{d}_l) - a_l\beta_l\mathcal{Q}_{\Lambda_l^C /\beta_l}(\textbf{t}_l/\beta_l+\textbf{d}_l) \\
    = & \bar{\textbf{z}}_1 + \sum_{l=1}^2 a_l \underbrace{\left(\textbf{t}_l - \mathcal{Q}_{\Lambda_l^C }(\textbf{t}_l+\beta_l\textbf{d}_l)\right)}_{\tilde{\textbf{t}}_l},
\end{align*}
where $\textbf{W}$ is a $n\times n$ equalization matrix which will be determined later. The last equality holds because $\mathcal{Q}_\Lambda(\beta \textbf{x}) = \beta \mathcal{Q}_{\Lambda/\beta}(\textbf{x})$ for any lattice $\Lambda$ and any nonzero real $\beta$. Note that $\tilde{\textbf{t}}_l\in\Lambda_l^F$ is not the codeword of user $l$. But it is enough to recover the codewords since $\tilde{\textbf{t}}_l$ and the codeword $\textbf{t}_l$ belong to the same coset of $\Lambda_l^C$ \cite{Zhu17,Gamal04}. Since $\textbf{c}_l$ is uniformly distributed on $\mathcal{V}_l^C/\beta_l$ which is symmetric with respect to the origin, the term $-\textbf{c}_l$ in $\bar{\textbf{z}}_1$ has the same distribution as $\textbf{c}_l$. Following Lemma~\ref{lem_rate_linear_combination} with $\textbf{u} = \sum_{l=1}^2 a_l\tilde{\textbf{t}}_l$, $\textbf{A} = \textbf{W}$ and $\textbf{B}_l = \tilde a_l\textbf{I}_{n}-\sqrt{P}\textbf{W}{\textbf{H}}_l$, the achievable rate of decoding the linear combination $\sum_{l=1}^2 a_l\tilde{\textbf{t}}_l$ is given by
\begin{equation}\label{eq_achievable_rate_general}
    r_{l}^{(1)} < \lim_{n\rightarrow\infty}\; \max_{\textbf{W}}\; \frac{1}{2n} \log^+ \frac{\beta_l^{2n}}{|\boldsymbol{\Sigma}_1(\textbf{W})|},
\end{equation}
where $\boldsymbol{\Sigma}_1(\textbf{W}) = \textbf{W}\textbf{W}^T + \sum_{l=1}^2 (\tilde a_l\textbf{I}_{n}-\sqrt{P}\textbf{W}{\textbf{H}}_l)(\tilde a_l\textbf{I}_{n}-\sqrt{P}\textbf{W}{\textbf{H}}_l)^T$. Note that $\textbf{H}_l$ is diagonal. Expanding $\boldsymbol{\Sigma}_1(\textbf{W})$ gives
\begin{equation}
    \label{eq_Sigma_1}
    \begin{split}
        \boldsymbol{\Sigma}_1(\textbf{W}) = & \sum_{l=1}^2\tilde a_l^2\textbf{I}_{n} +  \textbf{W}\left(\textbf{I}_{n} + P \sum_{l=1}^2 {\textbf{H}}_l^2\right)\textbf{W}^T - \left(\sum_{l=1}^2\tilde a_l \sqrt{P} {\textbf{H}}_l^T\right) \textbf{W}^T - \textbf{W} \left(\sum_{l=1}^2\tilde a_l \sqrt{P} {\textbf{H}}_l \right).
    \end{split}
\end{equation}
By ``completing the square'', we can derive the optimal $\textbf{W}$ as
\begin{equation}
    \label{eq_optimal_W}
    \textbf{W}^* = \left(\sum_{l=1}^2\tilde a_l \sqrt{P}{\textbf{H}}_l\right) \left(\textbf{I}_{n} + \sum_{l=1}^2 P {\textbf{H}}_l^2 \right)^{-1}.
\end{equation}
Thus, 
\begin{equation}
    \label{eq_optimal_Sigma_1}
    \begin{split}
        \boldsymbol{\Sigma}_1(\textbf{W}^*) = &\sum_{l=1}^2\tilde a_l^2\textbf{I}_{n} - \left(\sum_{l=1}^2\tilde a_l \sqrt{P}{\textbf{H}}_l\right)\left(\textbf{I}_{n} + \sum_{l=1}^2 P {\textbf{H}}_l^2 \right)^{-1}\left(\sum_{l=1}^2\tilde a_l \sqrt{P}{\textbf{H}}_l\right).
    \end{split}
\end{equation}
Using the matrix determinant lemma 
\begin{equation}\label{eq_matrix_determinant_lemma}
    |\textbf{D}-\textbf{C}\textbf{A}^{-1}\textbf{B}| = \frac{\left|\begin{array}{cc}
        \textbf{A} & \textbf{B} \\
        \textbf{C} & \textbf{D}
    \end{array}\right|}{|\textbf{A}|} = \frac{|\textbf{D}||\textbf{A}-\textbf{B}\textbf{D}^{-1}\textbf{C}|}{|\textbf{A}|},
\end{equation}
We can write $|\boldsymbol{\Sigma}_1(\textbf{W}^*)|$ as
\begin{equation}
    \label{eq_optimal_det_Sigma_1}
    |\boldsymbol{\Sigma}_1(\textbf{W}^*)| = \frac{\left|(\tilde a_1^2 + \tilde a_2^2)\textbf{I}_{n} + P (\tilde a_1 {\textbf{H}}_2 - \tilde a_2 \textbf{H}_1)^2 \right|}{\left|\textbf{I}_{n} + \sum_{l=1}^2 P \textbf{H}_l^2 \right|}.
\end{equation}
Since $\textbf{H}_1$ and $\textbf{H}_2$ are diagonal, it can be written as
\begin{equation}
    \label{eq_optimal_det_Sigma_1_prod}
    |\boldsymbol{\Sigma}_1(\textbf{W}^*)| = \prod_{i=1}^n \frac{(\tilde a_1^2 + \tilde a_2^2) + P (\tilde a_1 h_{2i} - \tilde a_2 h_{1i})^2}{1 + P h_{1i}^2 + P h_{2i}^2}.
\end{equation}
Plugging the above into (\ref{eq_achievable_rate_general}) gives
\begin{equation}\label{eq_rate1_lim}
    r_l^{(1)} < \lim_{n\rightarrow\infty}\; \frac{1}{2n} \sum_{i=1}^n \log 
    \frac{\beta_l^{2} \left(1 + P h_{1i}^2 + P h_{2i}^2\right)}
    {M(\boldsymbol{\beta},\textbf{h}_i)} = \frac{1}{2} \mathbb{E}_{\textbf{h}} \left[\log 
    \frac{\beta_l^{2} \left(1 + P h_{1}^2 + P h_{2}^2\right)}
    {M(\boldsymbol{\beta},\textbf{h})}\right],
\end{equation}
where $M(\boldsymbol{\beta},\textbf{h})$ is given by (\ref{eq_Mh}) and 
\begin{equation}
    \label{eq_Mhi}
    M(\boldsymbol{\beta},\textbf{h}_i) = (\tilde a_1^2 + \tilde a_2^2) + P (\tilde a_1 h_{2i} - \tilde a_2 h_{1i})^2.
\end{equation}

Given the first decoded linear combination, we can reconstruct $\sum_{l=1}^2 \tilde a_l \textbf{c}_l = \sum_{l=1}^2 a_l \tilde{\textbf{t}}_l + \sum_{l=1}^2 a_l \beta_l \textbf{d}_l$. To decode the second linear combination with coefficients $\textbf{b}$, the receiver computes
\begin{align*}
    \textbf{y}_2^\prime = & \textbf{F}\textbf{y} + \textbf{L} \sum_{l=1}^2 a_l\beta_l \textbf{c}_l - \sum_{l=1}^2 b_l \beta_l \textbf{d}_l \\
    = &  \textbf{F} \sum_{l=1}^2 \sqrt{P}{\textbf{H}}_l \textbf{c}_l + \textbf{F}\textbf{z} + \textbf{L} \sum_{l=1}^2 a_l\beta_l \textbf{c}_l - \sum_{l=1}^2 b_l\beta_l \textbf{d}_l - \sum_{l=1}^2 b_l\beta_l \textbf{c}_l + \sum_{l=1}^2 b_l\beta_l \textbf{c}_l \\
    = & \underbrace{\textbf{F} {\textbf{z}} + \sum_{l=1}^2 (\tilde b_l \textbf{I}_{n} - \sqrt{P}\textbf{F}{\textbf{H}}_l - \tilde a_l \textbf{L})(-\textbf{c}_l) }_{\bar{\textbf{z}}_2} + \sum_{l=1}^2 b_l \tilde{\textbf{t}},
\end{align*}
where $\textbf{F},\textbf{L}$ are $n\times n$ matrices that will be determined later. Following Lemma~\ref{lem_rate_linear_combination} with $\textbf{u} = \sum_{l=1}^2 b_l\tilde{\textbf{t}}_l$, $\textbf{A} = \textbf{F}$ and $\textbf{B}_l = \tilde b_l\textbf{I}_{n}-\sqrt{P}\textbf{F}{\textbf{H}}_l - \tilde a_l \textbf{L}$, the achievable rate of user $l$ is given by
\begin{equation}\label{eq_achievable_rate_b}
    r_{l}^{(2)} <\lim_{n\rightarrow\infty} \; \max_{\textbf{F},\textbf{L}}\; \frac{1}{2n} \log^+ \frac{\beta_l^{2n}}{|\boldsymbol{\Sigma}_2(\textbf{F},\textbf{L})|},
\end{equation}
where $\boldsymbol{\Sigma}_2(\textbf{F},\textbf{L}) = \textbf{F}\textbf{F}^T + \sum_{l=1}^2 (\tilde b_l\textbf{I}_{n}-\sqrt{P}\textbf{F}{\textbf{H}}_l - \tilde a_l \textbf{L})(\tilde b_l\textbf{I}_{n}-\sqrt{P}\textbf{F}{\textbf{H}}_l - \tilde a_l \textbf{L})^T$. Expanding $\boldsymbol{\Sigma}_2(\textbf{F},\textbf{L})$ gives
\begin{equation}
    \label{eq_Sigma_2}
    \begin{split}
    \boldsymbol{\Sigma}_2(\textbf{F},\textbf{L}) = & \sum_{l=1}^2\tilde b_l^2 \textbf{I}_{n} +  \textbf{F}\left(\textbf{I}_{n} + \sum_{l=1}^2 P {\textbf{H}}_l^2\right)\textbf{F}^T - \left(\sum_{l=1}^2\tilde b_l \sqrt{P}{\textbf{H}}_l\right) \textbf{F}^T - \textbf{F} \left(\sum_{l=1}^2\tilde b_l \sqrt{P} {\textbf{H}_l} \right) \\
    & + \sum_{l=1}^2 \tilde a_l^2\textbf{L}\textbf{L}^T - \left(\sum_{l=1}^2 \tilde a_l (\tilde b_l \textbf{I}_{n} - \sqrt{P} \textbf{F} {\textbf{H}}_l )\right) \textbf{L}^T - \textbf{L} \left(\sum_{l=1}^2 \tilde a_l  (\tilde b_l \textbf{I}_{n} - \sqrt{P}  {\textbf{H}}_l \textbf{F}^T)\right).
    \end{split}
\end{equation}
Optimizing over $\textbf{L}$ gives
\begin{equation}
    \label{eq_optimal_L}
    \textbf{L}^*(\textbf{F}) = \left(\sum_{l=1}^2\tilde a_l^2\right)^{-1}\left(\sum_{l=1}^2 \tilde a_l (\tilde b_l \textbf{I}_{n} - \sqrt{P} \textbf{F}{\textbf{H}}_l )\right).
\end{equation}
Plugging $\textbf{L}^*(\textbf{F})$ into (\ref{eq_Sigma_2}) and optimizing over $\textbf{F}$ leads to
\begin{equation}
    \label{eq_optimal_F}
    \textbf{F}^* = \frac{\tilde a_1 \tilde b_2 - \tilde a_2 \tilde b_1}{\sum_{l=1}^2\tilde a_l^2}
    \sqrt{P}(\tilde a_1 {\textbf{H}}_2 - \tilde a_2 {\textbf{H}}_1)
    \textbf{K}^{-1},
\end{equation}
where 
\begin{equation}
    \label{eq_Mk}
    \begin{split}
        \textbf{K} = & \textbf{I}_{n} + \frac{P}{\sum_{l=1}^2\tilde a_l^2} (\tilde a_1 {\textbf{H}}_2  - \tilde a_2 {\textbf{H}}_1)^2 .
    \end{split}
\end{equation}
We can then derive 
\begin{equation}
    \label{eq_optimal_Sigma_2}
    \begin{split}
        \boldsymbol{\Sigma}_2(\textbf{F}^*,\textbf{L}^*(\textbf{F}^*)) = \frac{(\tilde a_1 \tilde b_2 - \tilde a_2 \tilde b_1)^2}{\sum_{l=1}^2\tilde a_l^2} \left[\textbf{I}_{n} - \frac{P}{\sum_{l=1}^2\tilde a_l^2}  (\tilde a_1 {\textbf{H}}_2 - \tilde a_2 {\textbf{H}}_1) \textbf{K}^{-1} (\tilde a_1 {\textbf{H}}_2 - \tilde a_2 {\textbf{H}}_1)\right].
    \end{split}
\end{equation}
Applying Woodbury matrix identity gives
\begin{equation}
    \label{eq_optimal_Sigma_2_woodbury}
    \begin{split}
        \boldsymbol{\Sigma}_2(\textbf{F}^*,\textbf{L}^*(\textbf{F}^*)) = \frac{(\tilde a_1 \tilde b_2 - \tilde a_2 \tilde b_1)^2}{\sum_{l=1}^2\tilde a_l^2} \textbf{K}^{-1}.
    \end{split}
\end{equation}
Therefore,
\begin{align}
    |\boldsymbol{\Sigma}_2(\textbf{F}^*,\textbf{L}^*(\textbf{F}^*))|
    & = \left| \frac{(\tilde a_1 \tilde b_2 - \tilde a_2 \tilde b_1)^{2}}
    {(\tilde a_1^2 + \tilde a_2^2)\textbf{I}_{n} + P (\tilde a_1 {\textbf{H}}_2 - \tilde a_2 \textbf{H}_1)^2} \right| \\
    & = \prod_{i=1}^n \frac{(\tilde a_1 \tilde b_2 - \tilde a_2 \tilde b_1)^{2}}{(\tilde a_1^2 + \tilde a_2^2) + P (\tilde a_1 h_{2i} - \tilde a_2 h_{1i})^2}
    \label{eq_optimal_det_Sigma_2}
\end{align}
Plugging the above into (\ref{eq_achievable_rate_b}) gives
\begin{equation}\label{eq_rate2_lim}
    r_l^{(2)} < \lim_{n\rightarrow\infty}\; \frac{1}{2n} \sum_{i=1}^n \log 
    \frac{M(\boldsymbol{\beta},\textbf{h}_i)}{(\tilde a_1 \tilde b_2 - \tilde a_2 \tilde b_1)^{2}} =\frac{1}{2} \mathbb{E}_{\textbf{h}} \left[ \log \frac{\beta_l^{2} { M(\boldsymbol{\beta},\textbf{h}) }}
    {(\tilde a_1 \tilde b_2 - \tilde a_2 \tilde b_1)^{2}} \right].
\end{equation}

For successful decoding, the transmission rate of each user should not exceed the smaller achievable rate of both decoding processes. If $a_l$ or $b_l$ equals zero, meaning one linear combination contains no information of user $l$, the achievable rate of user $l$ simply comes from the other linear combination. Therefore, the proof of Theorem~\ref{thm_achievable_rate} is complete.
\end{proof}

\begin{remark}\label{remark1}
    When we choose $\textbf{a} = (1,0)^T$ and $\textbf{b} = (0,1)^T$, the scheme becomes successive cancellation with lattice codes. In this case, by Theorem~\ref{thm_achievable_rate}, the achievable rates are given by
    \begin{equation*}
        R_1 = \frac{1}{2} \mathbb{E}_{\textbf{h}} \left[\log\frac{1 + P h_{1}^2 + P h_{2}^2}{1 + P h_{2}^2}\right],\quad R_2 = \frac{1}{2} \mathbb{E}_{\textbf{h}}  \left[ \log (1 + P h_{2}^2)\right].
    \end{equation*}
    This is one corner point of the ergodic capacity region. Another corner point can be achieved by choosing $\textbf{a} = (0,1)^T$ and $\textbf{b} = (1,0)^T$. Then with time sharing, the whole ergodic capacity region is achievable. This result matches \cite[Theorem 2]{Hindy15}. 
\end{remark}

\begin{remark}
    When we consider fixed channel gains, where $h_{1i} = h_1,h_{2i}=h_2$ for all $i = 1,\ldots,n$, Theorem~\ref{thm_achievable_rate} will reduce to \cite[Theorem 2]{Zhu17}.
\end{remark}

\section{Capacity achievability}
In this section, we will focus on the two-user Gaussian fast fading MAC with CFMA where the coefficients are chosen as $\textbf{a} = (1,1)$ and $\textbf{b} = (0,1)$ or $(1,0)$. The capacity region can be found in \cite{Gamal2011} and is given by
\begin{equation}
    \label{eq_capacity_ergodic}
    \begin{split}
        R_1 & < \frac{1}{2} \mathbb{E}_{\textbf{h}}  \left[ \log (1 + P h_{1}^2)\right] \\
        R_2 & < \frac{1}{2} \mathbb{E}_{\textbf{h}}  \left[ \log (1 + P h_{2}^2)\right] \\
        R_1 + R_2 & < \frac{1}{2} \mathbb{E}_{\textbf{h}}  \left[ \log (1 + P h_{1}^2 + P h_{2}^2)\right].
    \end{split}
\end{equation}

\subsection{Sufficient and necessary condition}
\begin{theorem}\label{thm_sum_capacity_condition_expectation}
    For the two-user Gaussian fast fading MAC with CFMA, when choosing $\textbf{a} = (1,1)$ and $\textbf{b} = (0,1)$ or $(1,0)$, the sum capacity $C(P,\textbf{h}) = \frac{1}{2}\mathbb{E}_{\textbf{h}}\left[ \log \left(1 + P h_1^2 + P h_2^2\right)\right]$ is achievable if and only if, for some choice of non-zero $\beta_1$, $\beta_2$, it holds that
    \begin{equation}\label{condition_expectation_ergodic}
        \mathbb{E}_{\textbf{h}} \left[\log \frac{f^2(\gamma, \textbf{h})}{\gamma^2(1 + P h_{1}^2 + P h_{2}^2)}\right] \leq 0,
    \end{equation}
    where $\gamma = \beta_1/\beta_2$, and 
    \begin{equation}\label{eq_f_gamma}
        f(\gamma, \textbf{h}) = \gamma^2+1+P(\gamma h_2 - h_1)^2.
    \end{equation}
\end{theorem}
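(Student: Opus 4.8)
The plan is to specialize Theorem~\ref{thm_achievable_rate} to the coefficient vectors $\textbf{a} = (1,1)$ and $\textbf{b}\in\{(1,0),(0,1)\}$ and reduce the resulting sum-rate to a single scalar comparison. First I would substitute $a_1=a_2=1$, so that $\tilde a_l = \beta_l$ and
\begin{equation*}
M(\boldsymbol{\beta},\textbf{h}) = \beta_1^2+\beta_2^2 + P(\beta_1 h_2 - \beta_2 h_1)^2 = \beta_2^2\, f(\gamma,\textbf{h}),
\end{equation*}
with $\gamma=\beta_1/\beta_2$, which recovers exactly the quantity $f(\gamma,\textbf{h})$ in (\ref{eq_f_gamma}). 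For either choice of $\textbf{b}$ one also checks $(\tilde a_1\tilde b_2 - \tilde a_2\tilde b_1)^2 = \beta_1^2\beta_2^2$. Plugging these into (\ref{eq_achievable_rate_MIMO_first}) and (\ref{eq_achievable_rate_MIMO_second}), the separate $\beta_l$-dependence collapses to $\gamma$, and all four quantities $r_l(\textbf{a},\boldsymbol{\beta})$ and $r_l(\textbf{b}|\textbf{a},\boldsymbol{\beta})$ become explicit expectations in $f(\gamma,\textbf{h})$ and $1+Ph_1^2+Ph_2^2$.

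The crux of the argument is a telescoping observation. Taking $\textbf{b}=(1,0)$, the case analysis in (\ref{eq_achievable_rate_ergodic_MAC_CFMA}) gives $R_2 = r_2(\textbf{a},\boldsymbol{\beta})$ and $R_1 = \min\{r_1(\textbf{a},\boldsymbol{\beta}),\,r_1(\textbf{b}|\textbf{a},\boldsymbol{\beta})\}$, so that
\begin{equation*}
R_1+R_2 = \min\Big\{\,r_1(\textbf{a},\boldsymbol{\beta})+r_2(\textbf{a},\boldsymbol{\beta}),\ \ r_1(\textbf{b}|\textbf{a},\boldsymbol{\beta})+r_2(\textbf{a},\boldsymbol{\beta})\,\Big\}.
\end{equation*}
I would then verify that the second branch telescopes exactly to the sum capacity, namely $r_1(\textbf{b}|\textbf{a},\boldsymbol{\beta})+r_2(\textbf{a},\boldsymbol{\beta}) = \frac{1}{2}\mathbb{E}_{\textbf{h}}[\log(1+Ph_1^2+Ph_2^2)] = C(P,\textbf{h})$, because the $f(\gamma,\textbf{h})$ factors cancel inside the logarithm. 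Writing $A := r_1(\textbf{a},\boldsymbol{\beta})+r_2(\textbf{a},\boldsymbol{\beta})$, this yields the clean identity $R_1+R_2 = \min\{A,\,C\}$. The same computation with $\textbf{b}=(0,1)$, using $R_1=r_1(\textbf{a},\boldsymbol{\beta})$ and $R_2=\min\{r_2(\textbf{a},\boldsymbol{\beta}),r_2(\textbf{b}|\textbf{a},\boldsymbol{\beta})\}$, reaches the identical conclusion, so both admissible choices of $\textbf{b}$ give the same condition.

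Finally, since $\min\{A,C\}=C$ precisely when $A\ge C$, the sum capacity is attained at the chosen $\gamma$ if and only if $A\ge C$. I would expand
\begin{equation*}
A = \tfrac{1}{2}\mathbb{E}_{\textbf{h}}\!\left[\log\frac{\gamma^2(1+Ph_1^2+Ph_2^2)^2}{f^2(\gamma,\textbf{h})}\right]
\end{equation*}
and rearrange $A\ge C$ directly into $\mathbb{E}_{\textbf{h}}[\log\frac{f^2(\gamma,\textbf{h})}{\gamma^2(1+Ph_1^2+Ph_2^2)}]\le 0$, which is (\ref{condition_expectation_ergodic}). For the ``if'' direction, this $\gamma$ (and any $\beta_1,\beta_2$ with that ratio) then achieves $C$; for the ``only if'' direction, achievability of $C$ forces $\min\{A,C\}=C$, hence $A\ge C$, hence the condition. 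The work here is not a hard inequality but the bookkeeping: confirming the telescoping leaves no residual $\gamma$-dependence, that both choices of $\textbf{b}$ coincide, and that the nonnegativity premises $r_l(\textbf{a},\boldsymbol{\beta})\ge 0$ and $r_l(\textbf{b}|\textbf{a},\boldsymbol{\beta})\ge 0$ required by Theorem~\ref{thm_achievable_rate} hold at the optimizing $\gamma$. Since the achieved pair is itself achievable and satisfies $R_1+R_2=C$, it automatically lies on the dominant face of the capacity region (\ref{eq_capacity_ergodic}), so the individual-rate constraints need no separate verification.
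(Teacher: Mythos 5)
Your proposal is correct and takes essentially the same route as the paper's proof: specializing Theorem~\ref{thm_achievable_rate} to $\textbf{a}=(1,1)$, exploiting the telescoping identity (the paper's version is $r_1(\textbf{a},\boldsymbol{\beta})+r_2(\textbf{b}|\textbf{a},\boldsymbol{\beta})=C(P,\textbf{h})$ for $\textbf{b}=(0,1)$), and reducing sum-capacity achievability to a single comparison, since your condition $A\ge C$ is exactly the paper's inequality $r_2(\textbf{b}|\textbf{a},\boldsymbol{\beta})\le r_2(\textbf{a},\boldsymbol{\beta})$ after adding $r_1(\textbf{a},\boldsymbol{\beta})$ to both sides, and both rearrange into (\ref{condition_expectation_ergodic}). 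The nonnegativity bookkeeping you defer does go through exactly as in the paper: $f(\gamma,\textbf{h})>1$ and $f(\gamma,\textbf{h})/\gamma^2>1$ give $r_l(\textbf{b}|\textbf{a},\boldsymbol{\beta})\ge 0$ unconditionally, and condition (\ref{condition_expectation_ergodic}) then forces $r_l(\textbf{a},\boldsymbol{\beta})\ge 0$ as well.
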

\begin{proof}
    When $\textbf{a} = (1,1)$ and $\textbf{b} = (0,1)$, we will first decode the sum of the inputs and the individual input from user $2$. Following Theorem~\ref{thm_achievable_rate}, the achievable rate of user $1$ is given by
    \begin{align}
        R_1^{(1)} = r_1(\textbf{a},\boldsymbol{\beta}) & = \frac{1}{2} \mathbb{E}_{\textbf{h}} \left[\log 
        \frac{\beta_1^{2} \left(1 + P h_{1}^2 + P h_{2}^2\right)}
        {(\beta_1^2 + \beta_2^2) + P (\beta_1 h_{2} - \beta_2 h_{1})^2}\right]\nonumber\\
        & = \frac{1}{2} \mathbb{E}_{\textbf{h}} \left[\log 
        \frac{\gamma^{2} \left(1 + P h_{1}^2 + P h_{2}^2\right)}
        {f(\gamma, \textbf{h})}\right] \label{eq_R1_b_0}.
    \end{align}
    The achievable rate of user $2$ is given by
    \begin{align}
        R_2^{(1)} & = \min\{r_2(\textbf{a},\boldsymbol{\beta}),\;r_2(\textbf{b}|\textbf{a},\boldsymbol{\beta})\} \nonumber \\
        & = \min\left\{ \frac{1}{2} \mathbb{E}_{\textbf{h}} \left[ \log 
        \frac{\left(1 + P h_{1}^2 + P h_{2}^2\right)}
        {f(\gamma, \textbf{h})}\right],\; 
        \frac{1}{2} \mathbb{E}_{\textbf{h}} \left[\log
        \frac{f(\gamma, \textbf{h})}{\gamma^2}\right] \right\}
        \label{eq_R2_a1_b1}.
    \end{align}
    
    We first prove that if the sum capacity is achievable then (\ref{condition_expectation_ergodic}) holds. If $R_1^{(1)}=0$, the sum-rate becomes $0+R_2^{(1)} \leq r_2(\textbf{a},\boldsymbol{\beta}) < C(P,\textbf{h})$ as $f(\gamma, \textbf{h}) > 1$ for any real $\gamma$ and $\textbf{h}$. So we require $R_1^{(1)}>0$, i.e.,
    \begin{equation}
        \label{eq_R1_positive}
        \mathbb{E}_{\textbf{h}} \left[\log 
        \frac{\gamma^{2} \left(1 + P h_{1}^2 + P h_{2}^2\right)}
        {f(\gamma, \textbf{h})}\right] > 0.
    \end{equation}
    To achieve the sum capacity, given $R_1^{(1)}$ the achievable rate of user $2$ should become
    \begin{equation}
        \label{eq_R2_required}
        R_2^{(1)} = C(P,\textbf{h}) - R_1^{(1)} = 
        \frac{1}{2} \mathbb{E}_{\textbf{h}} \left[\log \frac{f(\gamma, \textbf{h})}{\gamma^2}\right] 
        = r_2(\textbf{b}|\textbf{a},\boldsymbol{\beta}).
    \end{equation}
    Note that $r_2(\textbf{b}|\textbf{a},\boldsymbol{\beta}) > 0$ since
    \begin{equation}
        \label{eq_R2_positive}
        \frac{f(\gamma, \textbf{h})} {\gamma^2} = 1 + \frac{1}{\gamma^2}+P\left(h_2 - \frac{h_1}{\gamma}\right)^2 > 1, \quad \forall \gamma,h_1,h_2\in\mathbb{R}.
    \end{equation}
    Combining (\ref{eq_R2_a1_b1}) and (\ref{eq_R2_required}), we require $r_2(\textbf{b}|\textbf{a},\boldsymbol{\beta}) \leq r_2(\textbf{a},\boldsymbol{\beta})$, i.e.,
    \begin{equation}
        \label{eq_R2_satisfied}
        \mathbb{E}_{\textbf{h}} \left[\log
        \frac{f(\gamma, \textbf{h})}{\gamma^2}\right] \leq \mathbb{E}_{\textbf{h}} \left[ \log 
        \frac{\left(1 + P h_{1}^2 + P h_{2}^2\right)}
        {f(\gamma, \textbf{h})}\right].
    \end{equation}
    The above is equivalent to (\ref{condition_expectation_ergodic}), and with this condition (\ref{eq_R1_positive}) is also satisfied since $f(\gamma, \textbf{h}) > 1$. Thus, if the sum capacity is achievable, we have (\ref{condition_expectation_ergodic}) holds.

    Now we prove that when (\ref{condition_expectation_ergodic}) holds, the sum capacity is achievable. With (\ref{condition_expectation_ergodic}), we have $R_1^{(1)}>0$ and $r_2(\textbf{b}|\textbf{a},\boldsymbol{\beta}) \leq r_2(\textbf{a},\boldsymbol{\beta})$. Therefore, $R_2^{(1)} = r_2(\textbf{b}|\textbf{a},\boldsymbol{\beta})$. The sum-rate will become $r_2(\textbf{b}|\textbf{a},\boldsymbol{\beta}) + r_1(\textbf{a},\boldsymbol{\beta}) = C(P,\textbf{h})$. Thus, the proof for $\textbf{a} = (1,1)$ and $\textbf{b} = (0,1)$ is complete.

    When $\textbf{a} = (1,1)$ and $\textbf{b} = (1,0)$, still following Theorem~\ref{thm_achievable_rate}, we have
    \begin{align}
        R_1^{(2)} & = \min\left\{ \frac{1}{2} \mathbb{E}_{\textbf{h}} \left[ \log 
        \frac{\gamma^{2} \left(1 + P h_{1}^2 + P h_{2}^2\right)}
        {f(\gamma, \textbf{h})}\right],\;
        \frac{1}{2}\mathbb{E}_{\textbf{h}} \left[ \log f(\gamma, \textbf{h})\right] \right\}
        \label{eq_R1_a1_b1}, \\
        R_2^{(2)} & = r_2(\textbf{a},\boldsymbol{\beta}) = \frac{1}{2} \mathbb{E}_{\textbf{h}} \left[ \log \frac{\left(1 + P h_{1}^2 + P h_{2}^2\right)}
        {f(\gamma, \textbf{h})}\right] \label{eq_R2_b_0}.
    \end{align}
    Similar to the previous proof, the achievability of sum capacity is equivalent to $R_1^{(2)} = \frac{1}{2}\mathbb{E}_{\textbf{h}} \left[ \log f(\gamma, \textbf{h})\right] $, thus further equivalent to
    \begin{equation}
        \label{eq_R1_satisfied}
        \frac{1}{2}\mathbb{E}_{\textbf{h}} \left[ \log f(\gamma, \textbf{h})\right]  \leq \frac{1}{2} \mathbb{E}_{\textbf{h}} \left[ \log \frac{\gamma^{2} \left(1 + P h_{1}^2 + P h_{2}^2\right)}
        {f(\gamma, \textbf{h})}\right].
    \end{equation}
    The above is equivalent to (\ref{condition_expectation_ergodic}). Under this condition, $R_2^{(2)}>0$ given (\ref{eq_R2_positive}). Moreover, $R_1^{(2)}>0$ as $f(\gamma, \textbf{h}) > 1$ by definition. Thus, the proof is complete.
\end{proof}

\begin{remark}
    When we consider the fixed channel case and positive $\beta_1,\beta_2$, the condition (\ref{condition_expectation_ergodic}) in Theorem~\ref{thm_sum_capacity_condition_expectation} becomes
    \begin{equation*}
        f(\gamma, \textbf{h}) - \gamma \sqrt{1 + P h_{1}^2 + P h_{2}^2} \leq 0.
    \end{equation*}
    This result matches Case II) in \cite[Theorem 3]{Zhu17}.
\end{remark}

\subsection{A sufficient condition}
Given the statistics of the channel and any fixed $\gamma$, Theorem~\ref{thm_sum_capacity_condition_expectation} already allows us to verify if the sum capacity is achievable. Furthermore, we would like to identify the values of $\gamma$ which allow CFMA to achieve the capacity, as in the fixed channel case. However, it is difficult to work with the expression in Theorem~\ref{thm_sum_capacity_condition_expectation} directly. To this end, we present the following sufficient condition which allows us to answer this question more easily. 
\begin{lem}\label{lem_Sum_Capacity_sufficient}
    For the two-user Gaussian fast fading MAC with CFMA, when choosing $\textbf{a} = (1,1)$ and $\textbf{b} = (0,1)$ or $(1,0)$ with CFMA, the sum capacity is achievable if 
    \begin{equation}
        \label{eq_sum_capacity_sufficient}
        \mathbb{E}_{\textbf{h}} \left[f(\gamma, \textbf{h})\right] \leq |\gamma|\cdot 2^{C(P,\textbf{h})},
    \end{equation}
    where $C(P,\textbf{h}) = \frac{1}{2}\mathbb{E}_{\textbf{h}}\left[ \log \left(1 + P h_1^2 + P h_2^2\right)\right]$.
\end{lem}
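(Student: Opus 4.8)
The plan is to derive the sufficient condition \eqref{eq_sum_capacity_sufficient} directly from the exact achievability criterion \eqref{condition_expectation_ergodic} of Theorem~\ref{thm_sum_capacity_condition_expectation}, by moving the logarithm outside the expectation via Jensen's inequality. First I would put \eqref{condition_expectation_ergodic} into a cleaner additive form. Splitting the logarithm of the ratio and recalling that $\mathbb{E}_{\textbf{h}}[\log(1+Ph_1^2+Ph_2^2)] = 2C(P,\textbf{h})$ by the definition of the sum capacity, together with $\log\gamma^2 = 2\log|\gamma|$, the condition \eqref{condition_expectation_ergodic} is seen to be equivalent to
\begin{equation*}
    \mathbb{E}_{\textbf{h}}[\log f(\gamma,\textbf{h})] \leq \log|\gamma| + C(P,\textbf{h}).
\end{equation*}
Thus it suffices to establish this last inequality whenever \eqref{eq_sum_capacity_sufficient} holds.

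The key step is Jensen's inequality applied to the concave map $\log(\cdot)$: since $f(\gamma,\textbf{h}) = \gamma^2 + 1 + P(\gamma h_2 - h_1)^2 \geq 1 > 0$ by \eqref{eq_f_gamma}, the logarithm is well defined and $\mathbb{E}_{\textbf{h}}[\log f(\gamma,\textbf{h})] \leq \log \mathbb{E}_{\textbf{h}}[f(\gamma,\textbf{h})]$. Now, if the hypothesis \eqref{eq_sum_capacity_sufficient} holds, i.e. $\mathbb{E}_{\textbf{h}}[f(\gamma,\textbf{h})] \leq |\gamma|\cdot 2^{C(P,\textbf{h})}$, then taking base-two logarithms gives $\log \mathbb{E}_{\textbf{h}}[f(\gamma,\textbf{h})] \leq \log|\gamma| + C(P,\textbf{h})$. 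Chaining the two inequalities yields $\mathbb{E}_{\textbf{h}}[\log f(\gamma,\textbf{h})] \leq \log|\gamma| + C(P,\textbf{h})$, which is precisely the equivalent form of \eqref{condition_expectation_ergodic}. By Theorem~\ref{thm_sum_capacity_condition_expectation} the sum capacity is therefore achievable, completing the argument.

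Conceptually, the whole point of the lemma is that the exact criterion \eqref{condition_expectation_ergodic} involves $\mathbb{E}_{\textbf{h}}[\log f(\gamma,\textbf{h})]$, an expectation of a logarithm that is generally hard to evaluate or to optimize over $\gamma$ in closed form, whereas the relaxed criterion replaces it with $\log \mathbb{E}_{\textbf{h}}[f(\gamma,\textbf{h})]$, a logarithm of an expectation. Since $\mathbb{E}_{\textbf{h}}[f(\gamma,\textbf{h})] = \gamma^2 + 1 + P\,\mathbb{E}_{\textbf{h}}[(\gamma h_2 - h_1)^2]$ depends only on the first and second moments of the channel gains, it is available in closed form, which is exactly what makes the subsequent search over admissible $\gamma$ tractable. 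There is no real analytic obstacle beyond the single Jensen bound; the only point to keep in mind is that the relaxation is one-directional and can only discard achievable configurations, so \eqref{eq_sum_capacity_sufficient} is genuinely sufficient but not necessary, consistent with the statement of the lemma.
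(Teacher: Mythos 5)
Your proof is correct and follows essentially the same route as the paper's: both take the logarithm of the hypothesis \eqref{eq_sum_capacity_sufficient}, apply Jensen's inequality to bound $\mathbb{E}_{\textbf{h}}[\log f(\gamma,\textbf{h})]$ by $\log \mathbb{E}_{\textbf{h}}[f(\gamma,\textbf{h})]$, and recognize the resulting inequality as the exact criterion \eqref{condition_expectation_ergodic} of Theorem~\ref{thm_sum_capacity_condition_expectation}. The only difference is presentational (you rewrite the target condition in additive form first, the paper works forward from the hypothesis), and your closing remark on the one-directional nature of the relaxation is a nice, correct observation.
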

\begin{proof}
    Taking logarithm for both sides of ($\ref{eq_sum_capacity_sufficient}$) gives
    \begin{equation}\label{eq_log_sufficient}
        \log(\mathbb{E}_{\textbf{h}} \left[f(\gamma, \textbf{h})\right]) \leq \log|\gamma| + C(P,\textbf{h})
    \end{equation}
    By Jensen's inequality, we have $\log(\mathbb{E}_{\textbf{h}} \left[f(\gamma, \textbf{h})\right]) \geq \mathbb{E}_{\textbf{h}} \left[\log f(\gamma, \textbf{h})\right]$. Combining it with (\ref{eq_log_sufficient}) gives
    \begin{equation}
        \mathbb{E}_{\textbf{h}} \left[\log f(\gamma, \textbf{h})\right] \leq \log|\gamma| + \frac{1}{2}\mathbb{E}_{\textbf{h}}\left[ \log \left(1 + P h_1^2 + P h_2^2\right)\right].
    \end{equation}
    Moving all the terms to the left-hand side and writing them in a single term result in 
    \begin{equation}
        \mathbb{E}_{\textbf{h}} \left[\log \frac{f(\gamma, \textbf{h})}{|\gamma| \sqrt{1 + P h_{1}^2 + P h_{2}^2}}\right] \leq 0.
    \end{equation}
    The above is equivalent to (\ref{condition_expectation_ergodic}). By Theorem~\ref{thm_sum_capacity_condition_expectation}, the sum capacity is achievable.
\end{proof}

We can now give a condition of the channel statistics under which CFMA can achieve the sum capacity with properly chosen $\gamma$. Before that, we define several parameters and functions for present convenience. We let $\rho_l = \sqrt{P}h_l$ and $\mu_l = \mathbb{E}_{\boldsymbol{\rho}} \left[\rho_l\right]$ for $l = 1,2$. Further denote $q(\rho_l) = 1+\mathbb{E}_{\boldsymbol{\rho}}\left[\rho_l^2\right]$ and $C(\boldsymbol{\rho}) = \frac{1}{2}\mathbb{E}_{\boldsymbol{\rho}}\left[ \log \left(1 + \rho_{1}^2 + \rho_{2}^2\right)\right]$.
\begin{theorem}\label{thm_Sum_Capacity_achievability}
    For the two-user Gaussian fast fading MAC with CFMA, when choosing $\textbf{a} = (1,1)$ and $\textbf{b} = (0,1)$ or $(1,0)$, the sum capacity is achievable if
    \begin{enumerate}[{Case} I)]
        \item 
        \begin{equation}
            \label{sumcapacity_achieving_sufficient_condition1}
            g_1(\boldsymbol{\rho}) = \left(\mu_1\mu_2+2^{C(\boldsymbol{\rho})-1}\right)^2 - q(\rho_1) q(\rho_2) \geq 0
        \end{equation}
        when $\gamma$ is chosen within the interval
        \begin{equation}
            \label{eq_gamma_range1}
            \left[ \frac{\mu_1\mu_2+2^{C(\boldsymbol{\rho})-1}-\sqrt{g_1(\boldsymbol{\rho})}}{q(\rho_2)},\quad
            \frac{\mu_1\mu_2+2^{C(\boldsymbol{\rho})-1}+\sqrt{g_1(\boldsymbol{\rho})}}{q(\rho_2)}\right];
        \end{equation}
        \item
        \begin{equation}
            \label{sumcapacity_achieving_sufficient_condition2}
            g_2(\boldsymbol{\rho}) = \left(\mu_1\mu_2-2^{C(\boldsymbol{\rho})-1}\right)^2 - q(\rho_1) q(\rho_2) \geq 0
        \end{equation}
        when $\gamma$ is chosen within the interval
        \begin{equation}
            \label{eq_gamma_range2}
            \left[ \frac{\mu_1\mu_2-2^{C(\boldsymbol{\rho})-1}-\sqrt{g_2(\boldsymbol{\rho})}}{q(\rho_2)},\quad
            \frac{\mu_1\mu_2-2^{C(\boldsymbol{\rho})-1}+\sqrt{g_2(\boldsymbol{\rho})}}{q(\rho_2)}\right].
        \end{equation}
    \end{enumerate}
    
\end{theorem}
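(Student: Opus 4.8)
The plan is to start from the sufficient condition already established in Lemma~\ref{lem_Sum_Capacity_sufficient}, namely that the sum capacity is achievable whenever $\mathbb{E}_{\textbf{h}}[f(\gamma,\textbf{h})] \leq |\gamma|\,2^{C(P,\textbf{h})}$, and to turn this into the explicit, statistics-only condition of the theorem by evaluating the left-hand expectation in closed form. Substituting $\rho_l = \sqrt{P}\,h_l$ into the definition (\ref{eq_f_gamma}) rewrites $f(\gamma,\textbf{h}) = \gamma^2 + 1 + (\gamma\rho_2 - \rho_1)^2$, so expanding the square and taking expectations gives $\mathbb{E}_{\boldsymbol{\rho}}[f(\gamma,\textbf{h})] = q(\rho_2)\gamma^2 - 2\,\mathbb{E}_{\boldsymbol{\rho}}[\rho_1\rho_2]\,\gamma + q(\rho_1)$, using $q(\rho_l) = 1 + \mathbb{E}_{\boldsymbol{\rho}}[\rho_l^2]$. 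Invoking the independence of the two users' channel gains collapses the cross term to $\mathbb{E}_{\boldsymbol{\rho}}[\rho_1\rho_2] = \mu_1\mu_2$, and since $C(P,\textbf{h}) = C(\boldsymbol{\rho})$, the hypothesis of the lemma becomes the scalar inequality $q(\rho_2)\gamma^2 - 2\mu_1\mu_2\gamma + q(\rho_1) \leq |\gamma|\,2^{C(\boldsymbol{\rho})}$.

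The next step is to remove the absolute value by splitting on the sign of $\gamma$, which is precisely what produces the two cases. Writing $2^{C(\boldsymbol{\rho})} = 2\cdot 2^{C(\boldsymbol{\rho})-1}$, for $\gamma > 0$ the inequality reads $q(\rho_2)\gamma^2 - 2(\mu_1\mu_2 + 2^{C(\boldsymbol{\rho})-1})\gamma + q(\rho_1) \leq 0$, and for $\gamma < 0$ it reads $q(\rho_2)\gamma^2 - 2(\mu_1\mu_2 - 2^{C(\boldsymbol{\rho})-1})\gamma + q(\rho_1) \leq 0$. Each is a convex quadratic in $\gamma$ (leading coefficient $q(\rho_2) \geq 1 > 0$), so it admits a feasible $\gamma$ if and only if its discriminant is nonnegative; these discriminant conditions are exactly $g_1(\boldsymbol{\rho}) \geq 0$ in (\ref{sumcapacity_achieving_sufficient_condition1}) and $g_2(\boldsymbol{\rho}) \geq 0$ in (\ref{sumcapacity_achieving_sufficient_condition2}). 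The quadratic formula then returns the closed root intervals (\ref{eq_gamma_range1}) and (\ref{eq_gamma_range2}), and any $\gamma$ inside the relevant interval meets the lemma's hypothesis, so the sum capacity is achievable.

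The step requiring the most care is the sign bookkeeping introduced by the absolute value: in each case one must confirm that the stated interval is consistent with the sign of $\gamma$ under which the quadratic was derived. By Vieta's formulas the two roots of either quadratic have product $q(\rho_1)/q(\rho_2) > 0$ and hence share a common sign fixed by the sign of the linear coefficient, which pins down when the interval genuinely lies in $\gamma > 0$ (Case~I) or $\gamma < 0$ (Case~II). Should a root of one case-quadratic stray onto the opposite half-line, I would reconcile this by noting that the two case-quadratics differ only by the term $-4\cdot 2^{C(\boldsymbol{\rho})-1}\gamma$, so that on the positive half-line the Case~I quadratic is the smaller one; thus any positive $\gamma$ satisfying either inequality still verifies the original condition $\mathbb{E}_{\boldsymbol{\rho}}[f(\gamma,\textbf{h})] \leq |\gamma|\,2^{C(\boldsymbol{\rho})}$. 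This comparison guarantees that every $\gamma$ in either stated interval truly satisfies the hypothesis of Lemma~\ref{lem_Sum_Capacity_sufficient}, which completes the argument.
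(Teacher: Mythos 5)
Your proposal is correct and follows essentially the same route as the paper's proof: it invokes Lemma~\ref{lem_Sum_Capacity_sufficient}, evaluates $\mathbb{E}_{\textbf{h}}\left[f(\gamma,\textbf{h})\right] = q(\rho_2)\gamma^2 - 2\mu_1\mu_2\gamma + q(\rho_1)$, splits on the sign of $\gamma$ to remove the absolute value, and reads off the discriminant conditions and root intervals of the resulting convex quadratics. The only difference is in the sign bookkeeping: the paper simply asserts that (\ref{sumcapacity_achieving_sufficient_condition1}) forces $\mu_1\mu_2>0$ so that the interval (\ref{eq_gamma_range1}) lies in $\gamma>0$ (and symmetrically for Case~II), whereas you note that $\mathbb{E}_{\textbf{h}}\left[f(\gamma,\textbf{h})\right] - |\gamma|\cdot 2^{C(\boldsymbol{\rho})}$ is the pointwise minimum of the two case-quadratics, so any $\gamma$ in either interval satisfies the hypothesis of Lemma~\ref{lem_Sum_Capacity_sufficient} regardless of which half-line it falls on --- a slightly more self-contained treatment of the same point.
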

\begin{proof}
    We first calculate
    \begin{align}
        \mathbb{E}_{\textbf{h}} \left[f(\gamma, \textbf{h})\right] & = \mathbb{E}_{\boldsymbol{\rho}} \left[ \gamma^2+1+(\gamma \rho_2 - \rho_1)^2\right]\nonumber \\
        & = q(\rho_2)\gamma^2 - 2 \mu_1\mu_2\gamma + q(\rho_1) \label{eq_E_f}.
    \end{align}
    Let $g(\gamma) = \mathbb{E}_{\textbf{h}} \left[f(\gamma, \textbf{h})\right] - |\gamma|\cdot 2^{C(P,\textbf{h})}$, i.e.,
    \begin{equation}
        \label{eq_g_gamma}
        g(\gamma) = q(\rho_2)\gamma^2 - 2 \mu_1\mu_2\gamma - |\gamma|\cdot 2^{C(P,\textbf{h})} + q(\rho_1)
    \end{equation}
    
    In Case I), the condition (\ref{sumcapacity_achieving_sufficient_condition1}) implies $\mu_1\mu_2>0$. With (\ref{eq_gamma_range1}), we observe $\gamma>0$. Then $g(\gamma)$ is given by
    \begin{equation}
        \label{eq_g_gamma1}
        g(\gamma) = q(\rho_2)\gamma^2 - 2 (\mu_1\mu_2+2^{C(P,\textbf{h})-1})\gamma + q(\rho_1).
    \end{equation}
    The condition (\ref{sumcapacity_achieving_sufficient_condition1}) guarantees $g(\gamma)$ has two real roots. Since $q(\rho_2)>0$ and $\gamma$ is chosen between the roots, we have $g(\gamma) \leq 0$. From Lemma~\ref{lem_Sum_Capacity_sufficient}, the sum capacity is achievable in this case.

    In Case II), the condition (\ref{sumcapacity_achieving_sufficient_condition2}) implies $\mu_1\mu_2<0$. With (\ref{eq_gamma_range2}), we observe $\gamma<0$. Then $g(\gamma)$ is given by
    \begin{equation}
        \label{eq_g_gamma2}
        g(\gamma) = q(\rho_2)\gamma^2 - 2 (\mu_1\mu_2-2^{C(P,\textbf{h})-1})\gamma + q(\rho_1).
    \end{equation}
    The condition (\ref{sumcapacity_achieving_sufficient_condition2}) guarantees $g(\gamma)$ has two real roots. Since $q(\rho_2)>0$ and $\gamma$ is chosen between the roots, we have $g(\gamma) \leq 0$. From Lemma~\ref{lem_Sum_Capacity_sufficient}, the sum capacity is achievable in this case. Thus, the proof is complete.
\end{proof}
\begin{remark}
    From the expressions in (\ref{eq_gamma_range1}) and (\ref{eq_gamma_range2}), we see that the optimal choices of $\gamma$ only depend on the channel statistics, but not the channel realizations. This is compatible with our assumption that instantaneous CSI is not available at transmitters.  
\end{remark}
\begin{remark}
    In Theorem~\ref{thm_Sum_Capacity_achievability}, if we replace $\rho_2$ with $-\rho_2$ in Case I), we will get Case II). Thus, we only consider one case in the next analysis while still without loss of generality.
\end{remark}

\subsection{A sub-optimal choice of $\gamma$}
The values of $\gamma$ specified in Theorem~\ref{thm_Sum_Capacity_achievability} are shown to be sufficient for achieving the sum capacity. However, the evaluation of the intervals may not be very easy. The following lemma gives a condition with a simple sub-optimal choice of $\gamma = \mu_1/\mu_2$, i.e., the ratio of the means of the effective channel gains. We will show later in the numerical results that this simple and intuitive choice of $\gamma$ in general only suffers a slight rate loss compared to the optimal choice.
\begin{lem}\label{lem_Sum_Capacity_achievability_simple_gamma}
    For the two-user Gaussian fast fading MAC with CFMA, when choosing $\textbf{a} = (1,1)$, $\textbf{b} = (0,1)$ or $(1,0)$ and $\gamma=\gamma_0 := \frac{\mu_1}{\mu_2}$, the sum capacity is achievable if $\mu_1\mu_2>0$ and
    \begin{equation}
        \label{sumcapacity_achieving_sufficient_condition}
        \frac{\mu_1}{\mu_2} (\text{Var}[\rho_2]+1)  + \frac{\mu_2}{\mu_1} (\text{Var}[\rho_1] + 1) \leq 2^{C(\boldsymbol{\rho})},
    \end{equation}
    where $C(\boldsymbol{\rho}) = \frac{1}{2}\mathbb{E}_{\boldsymbol{\rho}}\left[ \log \left(1 + \rho_{1}^2 + \rho_{2}^2\right)\right]$.
\end{lem}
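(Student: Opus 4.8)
The plan is to obtain this lemma as a direct specialization of Lemma~\ref{lem_Sum_Capacity_sufficient} to the particular choice $\gamma = \gamma_0 = \mu_1/\mu_2$. That lemma guarantees sum-capacity achievability whenever $\mathbb{E}_{\textbf{h}}\left[f(\gamma,\textbf{h})\right] \leq |\gamma|\cdot 2^{C(P,\textbf{h})}$, so it suffices to verify that this single inequality holds at $\gamma_0$ under the stated hypothesis. I would first record two elementary facts: since $\mu_1\mu_2>0$ we have $\gamma_0>0$, hence $|\gamma_0|=\gamma_0$; and since $\rho_l=\sqrt{P}h_l$ we have $C(P,\textbf{h})=C(\boldsymbol{\rho})$, so the two normalizations agree.

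Next I would substitute $\gamma_0$ into the closed form (\ref{eq_E_f}), namely $\mathbb{E}_{\textbf{h}}\left[f(\gamma,\textbf{h})\right]=q(\rho_2)\gamma^2-2\mu_1\mu_2\gamma+q(\rho_1)$, and divide the sufficient inequality through by $\gamma_0>0$. This reduces the condition of Lemma~\ref{lem_Sum_Capacity_sufficient} to
\begin{equation*}
    \frac{\mu_1}{\mu_2}\,q(\rho_2)-2\mu_1\mu_2+\frac{\mu_2}{\mu_1}\,q(\rho_1)\leq 2^{C(\boldsymbol{\rho})}.
\end{equation*}
The only remaining step is the variance decomposition $q(\rho_l)=1+\mathbb{E}_{\boldsymbol{\rho}}[\rho_l^2]=1+\text{Var}[\rho_l]+\mu_l^2$. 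Feeding this into the left-hand side, the two $\mu_l^2$ contributions produce the terms $\tfrac{\mu_1}{\mu_2}\mu_2^2=\mu_1\mu_2$ and $\tfrac{\mu_2}{\mu_1}\mu_1^2=\mu_1\mu_2$, which together with the cross term $-2\mu_1\mu_2$ cancel exactly. What survives is precisely the claimed inequality (\ref{sumcapacity_achieving_sufficient_condition}), so Lemma~\ref{lem_Sum_Capacity_sufficient} applies and the sum capacity is achievable.

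There is no genuine obstacle here: the argument is a one-line substitution followed by an algebraic cancellation, and the content of the lemma is really the observation that at $\gamma_0=\mu_1/\mu_2$ the mean-square contributions to $\mathbb{E}_{\textbf{h}}[f(\gamma,\textbf{h})]$ collapse and leave only variance terms. The two points requiring a little care are tracking the sign of $\gamma_0$ (so that $|\gamma_0|$ can be dropped, which is exactly where the hypothesis $\mu_1\mu_2>0$ is used) and applying the variance identity correctly; both are routine.
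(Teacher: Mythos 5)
Your proof is correct and follows essentially the same route as the paper's: both specialize Lemma~\ref{lem_Sum_Capacity_sufficient} to $\gamma_0=\mu_1/\mu_2$, use $\mu_1\mu_2>0$ to drop the absolute value, and exploit the variance decomposition so the mean-square terms cancel against the cross term $-2\mu_1\mu_2\gamma_0$, leaving exactly (\ref{sumcapacity_achieving_sufficient_condition}). The only cosmetic difference is that you start from the closed form (\ref{eq_E_f}) and divide through by $\gamma_0$, whereas the paper expands $\mathbb{E}_{\textbf{h}}\left[f(\gamma_0,\textbf{h})\right]$ directly and then multiplies the hypothesis by $\gamma_0$; these are the same computation in a different order.
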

\begin{proof}
    With the choice of $\gamma = \gamma_0$ in Lemma~\ref{lem_Sum_Capacity_achievability_simple_gamma}, we have
    \begin{align}
        \mathbb{E}_{\textbf{h}} \left[f(\gamma_0, \textbf{h})\right] & = \mathbb{E}_{\boldsymbol{\rho}} \left[ \gamma_0^2+1+(\gamma_0 \rho_2 - \rho_1)^2\right]\nonumber \\
        & = \gamma_0^2+1+ \gamma_0^2 \mathbb{E}_{\boldsymbol{\rho}} \left[ \rho_2^2\right] + \mathbb{E}_{\boldsymbol{\rho}} \left[\rho_1^2\right] - 2 \gamma_0 \mathbb{E}_{\boldsymbol{\rho}} \left[\rho_1\right]\mathbb{E}_{\boldsymbol{\rho}} \left[\rho_2\right] \nonumber\\
        & = \gamma_0^2 (\text{Var}[\rho_2]+1)  + \text{Var}[\rho_1] + 1 \label{eq_e_f}.
    \end{align}
    The condition (\ref{sumcapacity_achieving_sufficient_condition}) gives
    \begin{equation}\label{eq_sufficient_gamma0}
        \gamma_0 (\text{Var}[\rho_2]+1)  + \frac{1}{\gamma_0} (\text{Var}[\rho_1] + 1) \leq 2^{C(\boldsymbol{\rho})}
    \end{equation}
    Combining (\ref{eq_e_f}) and (\ref{eq_sufficient_gamma0}) results in
    \begin{equation}\label{eq_sufficient_gamma0_ef}
    \mathbb{E}_{\textbf{h}} \left[f(\gamma_0, \textbf{h})\right] \leq \gamma_0\cdot 2^{C(P,\textbf{h})}.
    \end{equation}
    Since $\gamma_0 > 0$ by definition, following Lemma \ref{lem_Sum_Capacity_sufficient}, the sum capacity is achievable.
\end{proof}

The above result takes a particularly simple form in the special case when both channel gains are independent and identically distributed (i.i.d.) random variables. By symmetry, $\gamma=1$ is the optimal choice to achieve the sum capacity. The following lemma gives a simplified condition when the channel gains are i.i.d. Gaussian.
\begin{lem}
    If $\rho_1$ and $\rho_2$ are i.i.d. $\sim\mathcal{N} (\mu,\sigma^2)$ with nonzero $\mu$, the sum capacity is achievable if $\gamma = 1$ and 
    \begin{equation}
    \label{sumcapacity_achieving_sufficient_condition_symmetric}
    \sigma^2 \leq 2^{C(\boldsymbol{\rho})-1} - 1.
\end{equation}
\end{lem}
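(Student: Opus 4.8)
The plan is to obtain this statement as a direct specialization of Lemma~\ref{lem_Sum_Capacity_achievability_simple_gamma} to the symmetric i.i.d.\ setting, so no new machinery is required. First I would record the consequences of the i.i.d.\ hypothesis: since $\rho_1$ and $\rho_2$ share the distribution $\mathcal{N}(\mu,\sigma^2)$, their means coincide, $\mu_1 = \mu_2 = \mu$, and their variances coincide, $\text{Var}[\rho_1] = \text{Var}[\rho_2] = \sigma^2$. The prescribed choice $\gamma = 1$ is then exactly the sub-optimal ratio $\gamma_0 = \mu_1/\mu_2 = 1$ appearing in Lemma~\ref{lem_Sum_Capacity_achievability_simple_gamma}, which identifies that lemma as the correct tool to invoke. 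Because $\mu \neq 0$, we have $\mu_1\mu_2 = \mu^2 > 0$, verifying the first hypothesis of that lemma.

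Next I would substitute these equalities into the sufficient condition (\ref{sumcapacity_achieving_sufficient_condition}). Both ratios $\mu_1/\mu_2$ and $\mu_2/\mu_1$ become $1$, and both bracketed factors $\text{Var}[\rho_l]+1$ become $\sigma^2+1$, so the left-hand side collapses to $2(\sigma^2+1)$. Condition (\ref{sumcapacity_achieving_sufficient_condition}) then reads $2(\sigma^2+1) \leq 2^{C(\boldsymbol{\rho})}$; dividing by $2$ and rearranging yields precisely $\sigma^2 \leq 2^{C(\boldsymbol{\rho})-1}-1$, which is (\ref{sumcapacity_achieving_sufficient_condition_symmetric}). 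Invoking Lemma~\ref{lem_Sum_Capacity_achievability_simple_gamma} then closes the argument, since all its hypotheses have been checked.

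There is no genuine analytic obstacle here; the result is a clean corollary of the preceding lemma, and the only work is the algebraic collapse above. The point worth flagging is conceptual rather than technical: the derivation uses the Gaussian hypothesis \emph{only} through the first two moments $\mu$ and $\sigma^2$, so the same bound would follow for any i.i.d.\ pair with mean $\mu$ and variance $\sigma^2$. I would therefore double-check that the statement does not secretly demand more than equal means and equal variances---confirming, as above, that it does not---and present the Gaussian labelling simply as the concrete instance used in the numerical section. If one later wished to render the threshold fully explicit, the remaining task would be to evaluate $C(\boldsymbol{\rho}) = \frac{1}{2}\mathbb{E}_{\boldsymbol{\rho}}[\log(1+\rho_1^2+\rho_2^2)]$ for the Gaussian law, but that computation is unnecessary for the achievability claim as stated.
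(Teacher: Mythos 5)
Your proposal is correct and follows exactly the paper's own route: the paper proves this lemma in one line by specializing Lemma~\ref{lem_Sum_Capacity_achievability_simple_gamma} with $\mu_1=\mu_2=\mu$ and $\mathrm{Var}[\rho_1]=\mathrm{Var}[\rho_2]=\sigma^2$, so that $\gamma_0=1$ and the condition collapses to $2(\sigma^2+1)\leq 2^{C(\boldsymbol{\rho})}$, i.e.\ $\sigma^2\leq 2^{C(\boldsymbol{\rho})-1}-1$. Your additional checks (verifying $\mu_1\mu_2=\mu^2>0$ and noting that only the first two moments are used) are sound and simply make explicit what the paper leaves implicit.
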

\begin{proof}
    Following Lemma~\ref{lem_Sum_Capacity_achievability_simple_gamma}, with $\mu_1=\mu_2=\mu$ and $\text{Var}[\rho_1] = \text{Var}[\rho_2] = \sigma^2$, we can derive the above lemma.
\end{proof}
\begin{remark}
    We can intuitively see that if the variance is fixed, larger mean value will result in larger $C(\boldsymbol{\rho})$, which helps the condition hold. The numerical results in the next section will explore the values of $\mu$ and $\sigma$ for which (\ref{sumcapacity_achieving_sufficient_condition_symmetric}) holds.
\end{remark}

\subsection{The choices of the decoding coefficients}
We have been focusing on the capacity achievability of CFMA with the coefficient $\textbf{a} = (1,1)$ and $\textbf{b} = (0,1)$ or $(1,0)$. In the fixed channel case, it is shown \cite{Zhu17} that these two sets of coefficients are optimal, in the sense that any other coefficients would require a higher SNR to achieve the capacity region. Hence it is tempting to conjecture that the same result holds for the fading case. However, it is not straightforward because the result in the fixed channel case is proved given the fact that $\gamma$ is directly related to channel gains. In the fading case, as we assume CSIR only, $\gamma$ cannot be changed adaptively with instantaneous channel gains. We have not been able to prove the optimality of these two choices of coefficients in the current paper. However, we present some numerical results which strongly suggest that the same conclusion (namely, the optimal coefficients are $\textbf{a} = (1,1)$ and $\textbf{b} = (0,1)$ or $(1,0)$) holds for the two-user Gaussian fading channels.

Based on Theorem~\ref{thm_achievable_rate}, to achieve the sum capacity, one user should have the achievable rate of (\ref{eq_achievable_rate_MIMO_first}) and the other user should have the rate of (\ref{eq_achievable_rate_MIMO_second}), i.e.,
\begin{equation}\label{eq_general_condition_rate}
    R_1 = r_1(\textbf{a},\boldsymbol{\beta}), R_2 = r_2(\textbf{b}|\textbf{a},\boldsymbol{\beta}) \text{ or } R_1 = r_1(\textbf{b}|\textbf{a},\boldsymbol{\beta}), R_2 = r_2(\textbf{a},\boldsymbol{\beta}).
\end{equation}
In addition, the choice of the decoding coefficients $\textbf{a}$ and $\textbf{b}$ should satisfy
\begin{equation}\label{eq_general_condition_ab}
    (a_1b_2-a_2b_1)^2 = 1. 
\end{equation}
Note that the choice of $\textbf{a}$ will affect both rate expressions (\ref{eq_achievable_rate_MIMO_first}) and (\ref{eq_achievable_rate_MIMO_second}) while $\textbf{b}$ will only affect (\ref{eq_achievable_rate_MIMO_second}). We will now discuss the capacity achieving conditions for different coefficient choices in the following cases. 

\begin{enumerate}[{Case} I)]
    \item 
    When $a_1=0$ or $a_2=0$, from Theory~\ref{thm_achievable_rate} we have $R_1 = r_1(\textbf{b}|\textbf{a},\boldsymbol{\beta})$ or $R_2 = r_2(\textbf{b}|\textbf{a},\boldsymbol{\beta})$. Let us first consider $a_1=0$,  to satisfy (\ref{eq_general_condition_ab}) we require $(a_2b_1)^2=1$, i.e., $a_2,b_1 = \pm 1$. Furthermore, from (\ref{eq_achievable_rate_MIMO_second}) the achievable rate is independent of $b_2$ so $b_2$ could be any integer. An easy and optimal choice is $b_2 = 0$, which directly gives $R_2 = r_2(\textbf{a},\boldsymbol{\beta})$ and achieves sum capacity. By symmetry, when $a_2=0$, we require $a_1,b_2 = \pm 1$. If $b_1=0$, the sum capacity is achievable. Note that the achievable rate pair is independent of $\gamma$ in this case. It can be calculated and is the same as that derived from SIC discussed in Remark~\ref{remark1}, where the corner points of the capacity region can be achieved. 
    \item 
    When there is no zero entry in $\textbf{a}$ and $\textbf{b}$, from Theorem~\ref{thm_achievable_rate} we have 
    \begin{equation}
        \label{eq_achievable_rate_nonzero_ab}
        R_l = \min\{r_l(\textbf{a},\boldsymbol{\beta}),\;r_l(\textbf{b}|\textbf{a},\boldsymbol{\beta})\},\text{ for } l = 1,2.
    \end{equation}
    To achieve the sum capacity, the choice of $\textbf{a}$ and $\textbf{b}$ should satisfy (\ref{eq_general_condition_ab}). Moreover, from (\ref{eq_achievable_rate_MIMO_first}), (\ref{eq_achievable_rate_MIMO_second}), (\ref{eq_general_condition_rate}) and (\ref{eq_achievable_rate_nonzero_ab}) we have
    \begin{equation}
        \label{eq_condition_ab_nonzero}
        \mathbb{E}_{\textbf{h}} \left[\log \frac{\left\{a_1^2\gamma^2+a_2^2+P(a_1\gamma h_2-a_2h_1)^2\right\}^2}{\gamma^2(1 + P h_{1}^2 + P h_{2}^2)}\right] = 0.
    \end{equation}
    This is a very strong equality condition and is hard to be satisfied. Even if this condition can be satisfied, it is expected that only a few isolated choices of $\gamma$ can satisfy it, which means only a few points on the boundary of the capacity region can be achievable. In this sense, the coefficient choice in this case is usually not applicable.
    \item 
    When $a_1,a_2 \neq 0$ and $b_1 = 0$ or $b_2 = 0$, to achieve the sum capacity we first require (\ref{eq_general_condition_ab}) to be satisfied, i.e., $a_1,b_2 = \pm 1$ when $b_1 = 0$ and $a_2,b_1 = \pm 1$ when $b_2 = 0$. Since $\textbf{b}$ will not affect the achievable rates as long as (\ref{eq_general_condition_ab}) is satisfied, we can choose $\textbf{b} = (0,1)$ or $\textbf{b} = (1,0)$ without loss of generality. By further observation of (\ref{eq_Mh}), we can find that the sign of either $a_1$ or $a_2$ will not change the achievable rate if we adaptively choose the sign of the other coefficient of $\textbf{a}$. Thus we choose $a_1 = 1$ when $\textbf{b} = (0,1)$ and $a_2 = 1$ when $\textbf{b} = (1,0)$ without loss of generality. From Theorem~\ref{thm_achievable_rate} and (\ref{eq_general_condition_rate}), we also require
    \begin{equation}
        \label{eq_condition_b_zero}
        \mathbb{E}_{\textbf{h}} \left[\log \frac{\left\{a_1^2\gamma^2+a_2^2+P(a_1\gamma h_2-a_2h_1)^2\right\}^2}{\gamma^2(1 + P h_{1}^2 + P h_{2}^2)}\right] \leq 0.
    \end{equation}
    To summarize, to achieve sum capacity in this case, the decoding coefficients should be $\textbf{a} = (1, a_2), \textbf{b} = (0,1)$ and $\textbf{a} = (a_1, 1), \textbf{b} = (1,0)$ for some integer $a_1,a_2$ while there exists some $\gamma$ that satisfies (\ref{eq_condition_b_zero}). It is expected that a range of $\gamma$ would satisfy the condition and a segment of the capacity boundary could be achieved. Since we suggest $\textbf{a} = (1,1)$ is the best coefficient, the question is whether any choices of $a_1\neq 1$ or $a_2\neq 1$ could benefit the satisfaction of (\ref{eq_condition_b_zero}) or enlarge the achievable capacity region.
\end{enumerate} 

To see how the choice of $\textbf{a}$ can affect the achievable rate region in Case III), we study an example where $h_1\sim \mathcal{N}(10,4)$ and $h_2\sim \mathcal{N}(20,9)$ are independent Gaussian. The decoding coefficient $\textbf{b}$ is either $(1,0)$ or $(0,1)$. The left plot of Fig.~\ref{fig_ergodic_a_eg} shows the channel capacity and achievable rate region for $\textbf{a} = (1,1), (1,2)$ and $(2,1)$, respectively. It can be observed that the choice of $\textbf{a} = (1,1)$ achieves a much larger capacity region than other choices. In the right plot of Fig.~\ref{fig_ergodic_a_eg}, the left-hand side of (\ref{eq_condition_b_zero}) is plotted as a function of $\gamma$ for different $\textbf{a}$ choices. The function with $\textbf{a} = (1,1)$ has a much larger range of $\gamma$ that can achieve the sum capacity than other $\textbf{a}$ choices. The same observation holds for many other channel statistics according to our simulation results, which are not presented here. All these simulations strongly suggest that $\textbf{a}=(1,1)$ is the best decoding coefficient choice. This is also the reason why we focus on the case where $\textbf{a} = (1,1), \textbf{b}=(0,1)$ or $(1,0)$ in the previous sections.
\begin{figure}[!ht]
    \centering
    \includegraphics[width = 0.8\linewidth]{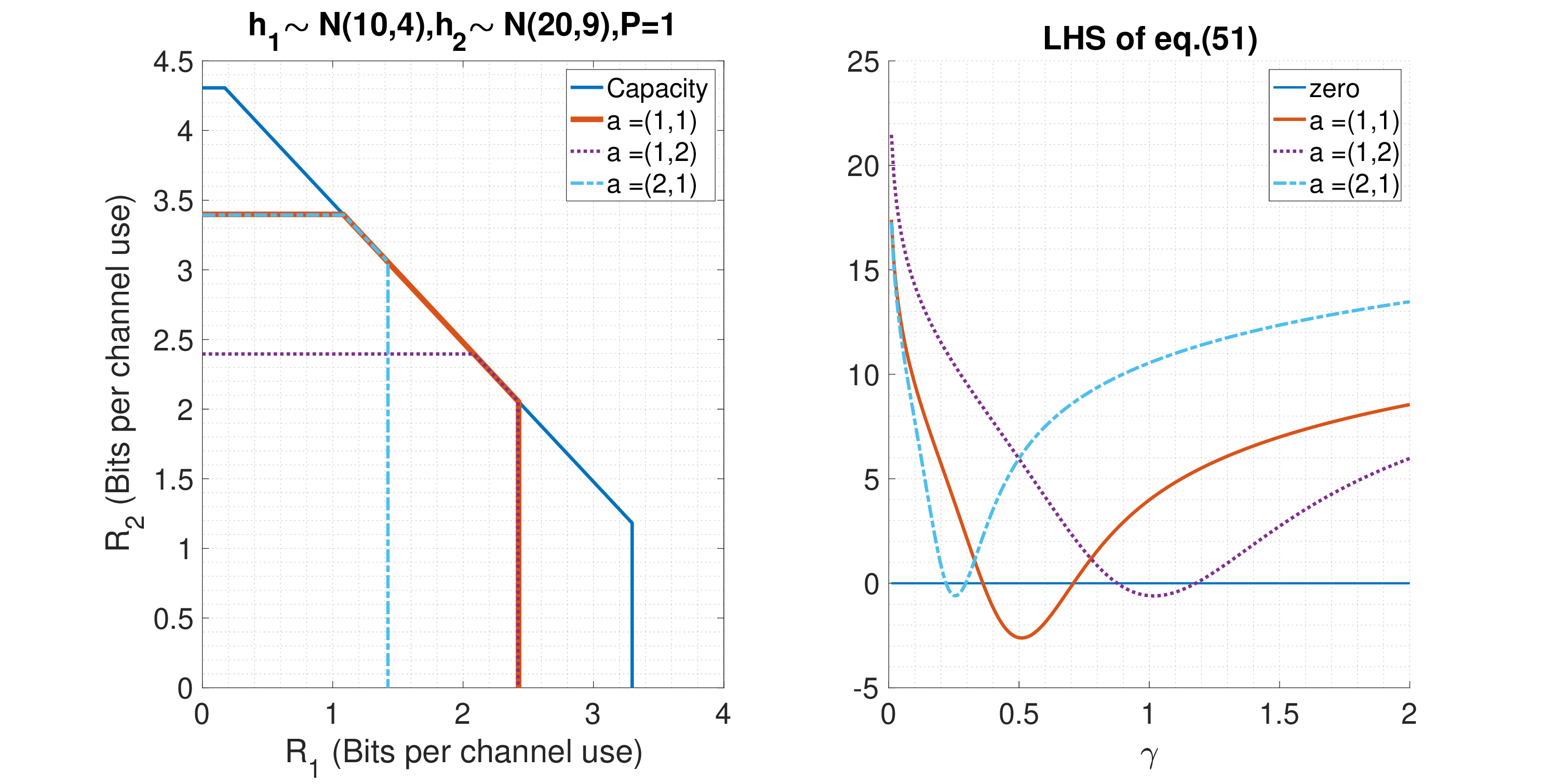}
    \caption{\small The achievable rate region and the capacity-achieving conditions for $\textbf{a} = (1,1), (1,2), (2,1)$. The choice of $\textbf{a} = (1,1)$ achieves the largest region of the capacity and has the largest range of $\gamma$ that can achieve the sum capacity.}
    \label{fig_ergodic_a_eg}
\end{figure}

\section{Numerical results}
We first look at a simple case where both channel gains are i.i.d. Gaussian with mean $2$ and variance $0.25$. The left plot of Fig.~\ref{fig_ergodic_eg_1} shows the capacity region in blue and the achievable rates with $\textbf{a} = (1,1),\textbf{b} = (0,1)$ and $\textbf{a} = (1,1),\textbf{b} = (1,0)$ in red and yellow, respectively. The achievable rate region is derived according to Theorem~\ref{thm_achievable_rate} by varying $\beta_1,\beta_2$. We can observe that the majority part of the capacity region, including a large part of the dominant face, is achievable. The achievable rates in red and yellow are almost the same. But the same rate pairs are achieved by different pair of $\beta_1,\beta_2$. The right plot shows two functions of $\gamma = \beta_1/\beta_2$. The blue curve refers to $\mathbb{E}_{\textbf{h}} \left[\log \frac{f^2(\gamma, \textbf{h})}{\gamma^2(1 + P h_{1}^2 + P h_{2}^2)}\right]$ from Theorem~\ref{thm_sum_capacity_condition_expectation} that gives the sufficient and necessary condition for capacity-achieving $\gamma$, while the red curve represents $\mathbb{E}_{\textbf{h}} \left[f(\gamma, \textbf{h})\right] - |\gamma|\cdot 2^{C(P,\textbf{h})}$ from Lemma~\ref{lem_Sum_Capacity_sufficient} that gives a sufficient condition for capacity-achieving $\gamma$. We are interested in the non-positive parts of both curves. The values of $\gamma$ which produce non-positive function values in the blue curve are sum capacity achieving choices. It can be observed that the range of $\gamma$ which produces non-positive function value in the red curve is within that of the blue curve, which verifies that Lemma~\ref{lem_Sum_Capacity_sufficient} presents a sufficient condition of Theorem~\ref{thm_sum_capacity_condition_expectation}.
\begin{figure}[!ht]
    \centering
    \includegraphics[width = 0.8\linewidth]{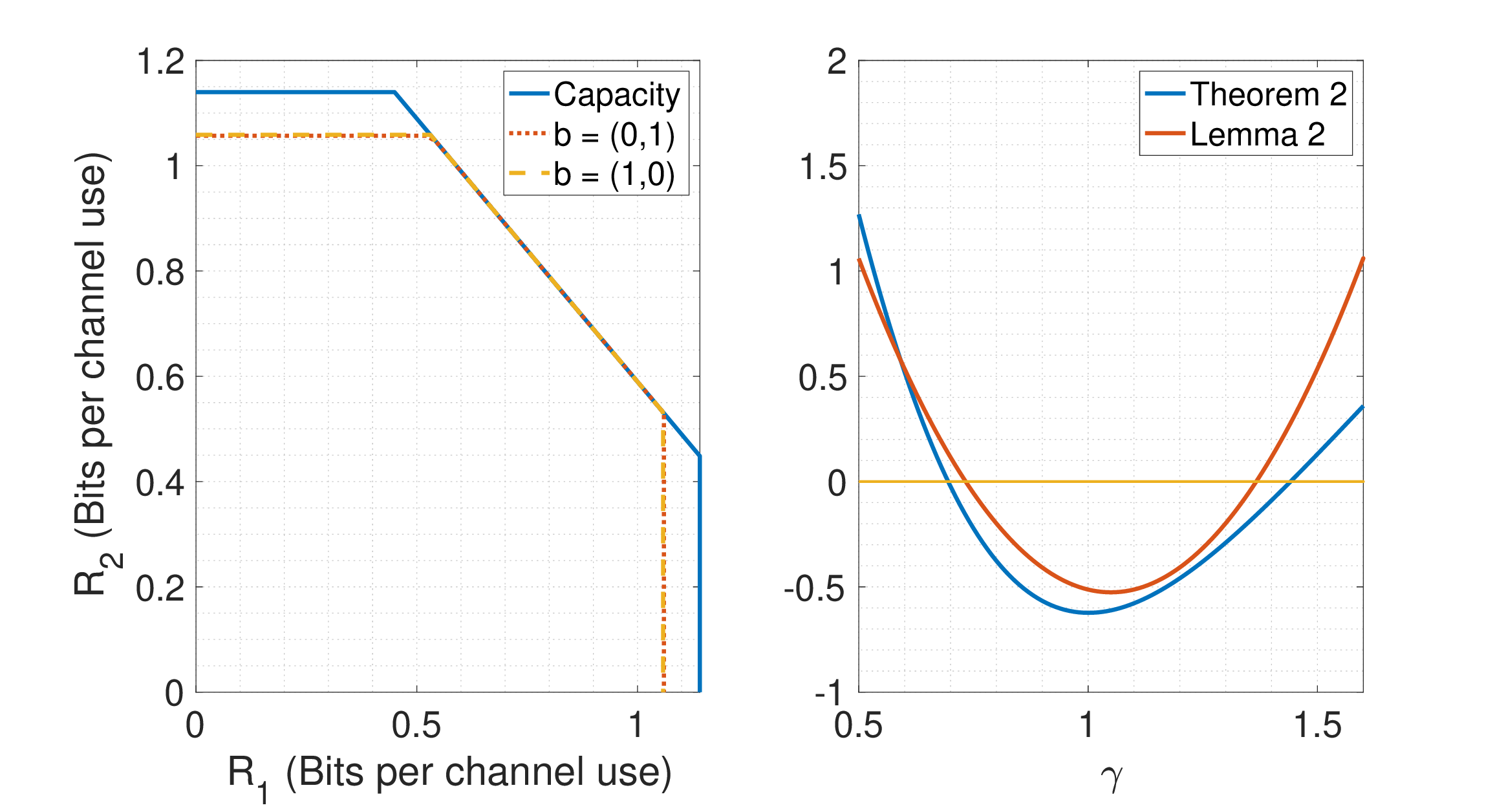}
    \caption{\small Symmetric Gaussian channel gains with $\mu=2,\sigma=0.5$. The left plot shows the majority part of the capacity region is achievable with $\textbf{a} = (1,1)$ and $\textbf{b} = (1,0)$ or $(0,1)$; The right plot shows the functions of $\gamma$ in Theorem~\ref{thm_sum_capacity_condition_expectation} and Lemma~\ref{lem_Sum_Capacity_sufficient}. We are interested in the $\gamma$'s which produce non-positive function values because with those choices of $\gamma$ the sum capacity is achievable.}
    \label{fig_ergodic_eg_1}
\end{figure}

We then look at how the variance of the channel gains affects the capacity achievability. In Fig.~\ref{fig_ergodic_rate_region}, we fix the mean of the channel gains $\mu$ and vary the standard deviation $\sigma$ from $0$ to $0.85$. It can be observed that when $\sigma = 0$ which is the fixed channel case, the whole capacity region is achieved. However, the entire capacity region is not achievable with the equation coefficients $\textbf{a} = (1,1),\textbf{b} = (0,1)$ or $(1,0)$ as long as the variance becomes positive. With $\sigma$ increasing, it becomes harder to achieve the whole capacity region. In particular, when $\sigma = 0.85$ even the sum capacity is not achieved. 
\begin{figure}[!ht]
    \centering
    \includegraphics[width = 0.9\linewidth]{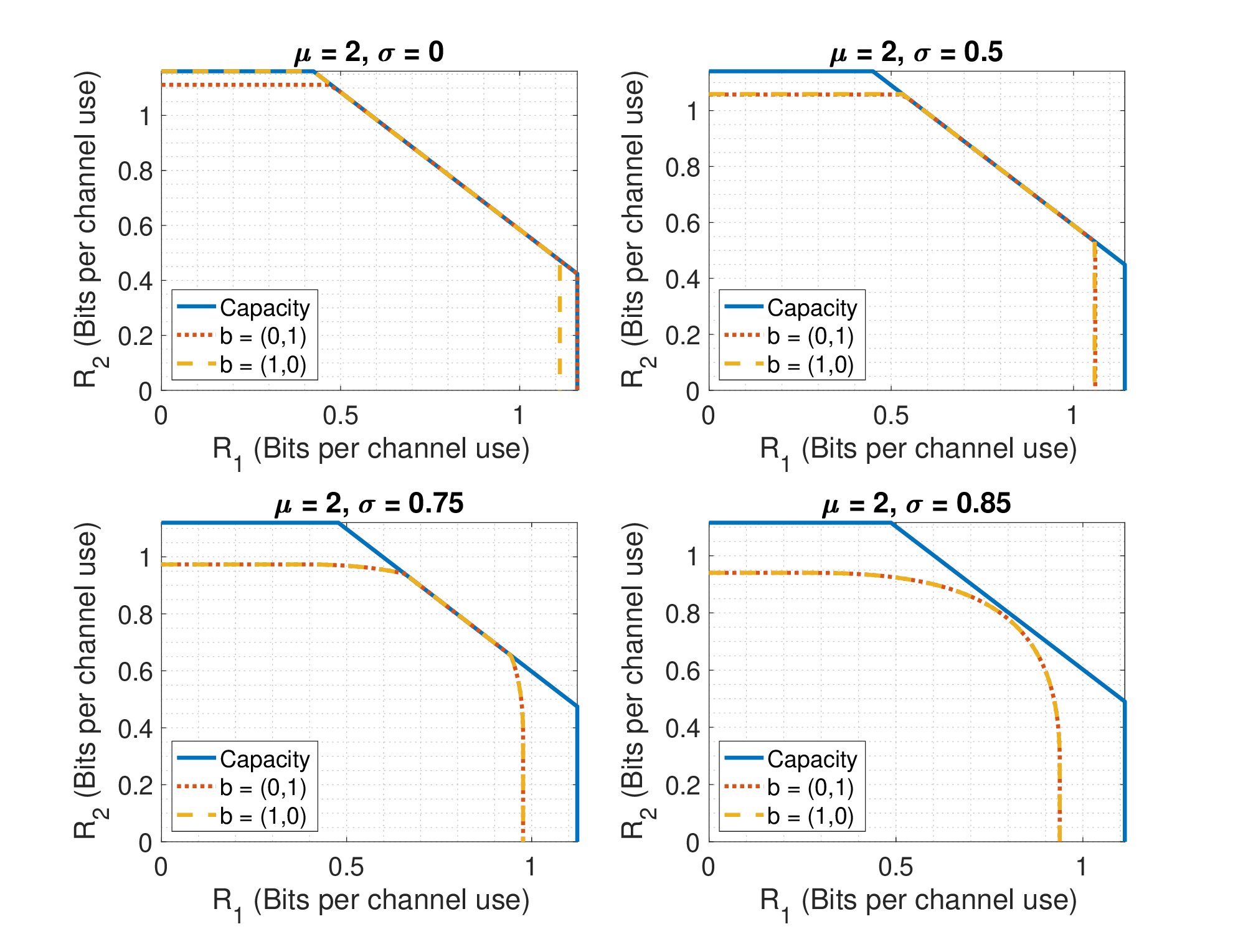}
    \caption{\small Symmetric Gaussian channel gains with $\mu=2$ and $\sigma=0,0.5,0.75.0.85$. The first subplot represents the fixed channel case. With the standard deviation increasing, the capacity becomes harder to achieve. In the last subplot, the achievable rate region falls off the boundary of the capacity boundary.}
    \label{fig_ergodic_rate_region}
\end{figure}


Next we investigate how the mean and variance of the channel gains affect the sum capacity achievability in the Gaussian case. In the case where $\rho_1$ and $\rho_2$ are i.i.d. Gaussian, the left plot of Fig.~\ref{fig_ergodic_symmetric_gaussian} shows how $\mu$ and $\sigma$ affect the sum capacity achievability. Region I refers to the cases where the sum capacity is not achievable with any choice of $\gamma$. Region II represents the cases where the sum capacity is achievable when $\gamma$ is chosen optimally to satisfy Theorem~\ref{thm_sum_capacity_condition_expectation}. Recall that since Theorem~\ref{thm_sum_capacity_condition_expectation} gives the necessary and sufficient conditions for capacity-achieving $\gamma$, Region II is the complement of Region I. It can be observed that larger $\mu$ and smaller $\sigma$ will benefit the sum capacity achievability. In other words, if the channel gains are too uncertain, it will be difficult to achieve capacity, which makes sense. The right plot of Fig.~\ref{fig_ergodic_symmetric_gaussian} gives a glimpse of the asymptotic relationship between the mean ($\mu$) and variance ($\sigma^2$) of the channel gain. The numerical result seems to suggest that the sum capacity can be achieved if $\sigma^2 < 2\mu$.
\begin{figure}[!ht]
    \centering
    \includegraphics[width = 0.9\linewidth]{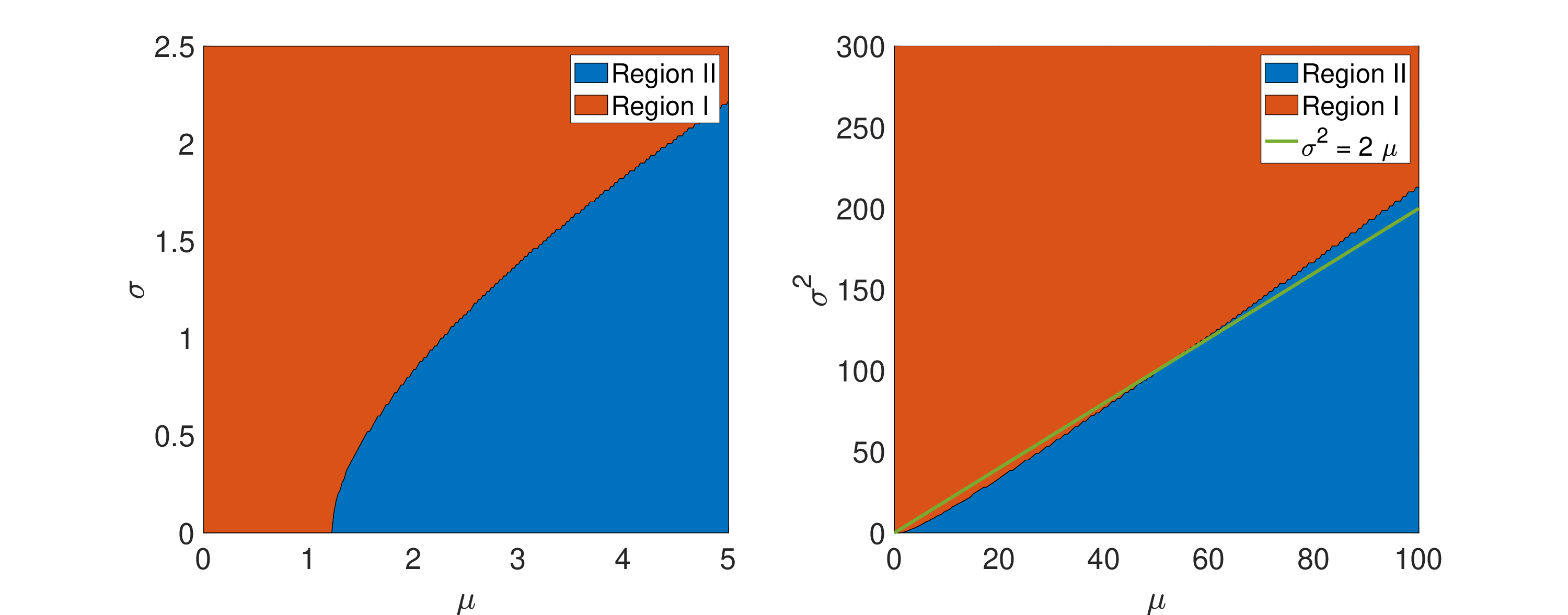}
    \caption{\small Symmetric Gaussian channel gains. In this case, the channel gains of both users are i.i.d. With the mean and standard deviation shown in the right bottom region, the sum capacity is achievable with $\textbf{a} = (1,1)$ and $\textbf{b} = (1,0)$ or $(0,1)$. This indicates that it is good to have large mean and small variance. Asymptotically, to achieve the sum capacity, the mean is required to increase faster than the standard deviation. The right plot suggests that the sum capacity can be achieved if $\sigma^2 < 2\mu$.}
    \label{fig_ergodic_symmetric_gaussian}
\end{figure}

When $\rho_1$ and $\rho_2$ are Gaussian with different distributions, we first fix their variances and look at how their means affect the sum capacity achievability. In Fig.~\ref{fig_ergodic_asymmetric_gaussian_fixed_var}, we set $\sigma_1^2 = \sigma_2^2 = 0.25$. Region I refers to the cases where the sum capacity is not achievable with any choice of $\gamma$. Region III contains the cases where the sum capacity is achievable with the specific choice $\gamma = \mu_1/\mu_2$. Region IV + III represents the cases where the sum capacity is achievable when $\gamma$ is chosen optimally, which is the complement of Region I. We can find that $\gamma = \mu_1/\mu_2$ is not a bad suboptimal choice of $\gamma$ since Region IV is not big. When $\mu_1=\mu_2$, this choice is optimal since the gap between Region I and Region III vanishes. It can be observed that if the mean of one channel gain is very small compared to its variance, even the mean of the other channel gain is large enough, the sum capacity is not achievable with the equation coefficients $\textbf{a} = (1,1)$ and $\textbf{b} = (0,1)$ or $(1,0)$ for any choice of $\gamma$. In this case, we may require other choice of equation coefficients. For example, the classical successive interference cancellation (corresponding the coefficients $\textbf{a} = (1,0)$ and $\textbf{b} = (0,1)$) still achieve the sum capacity, though only the corner point on the dominant face. 
\begin{figure}[!ht]
    \centering
    \includegraphics[width = 0.5\linewidth]{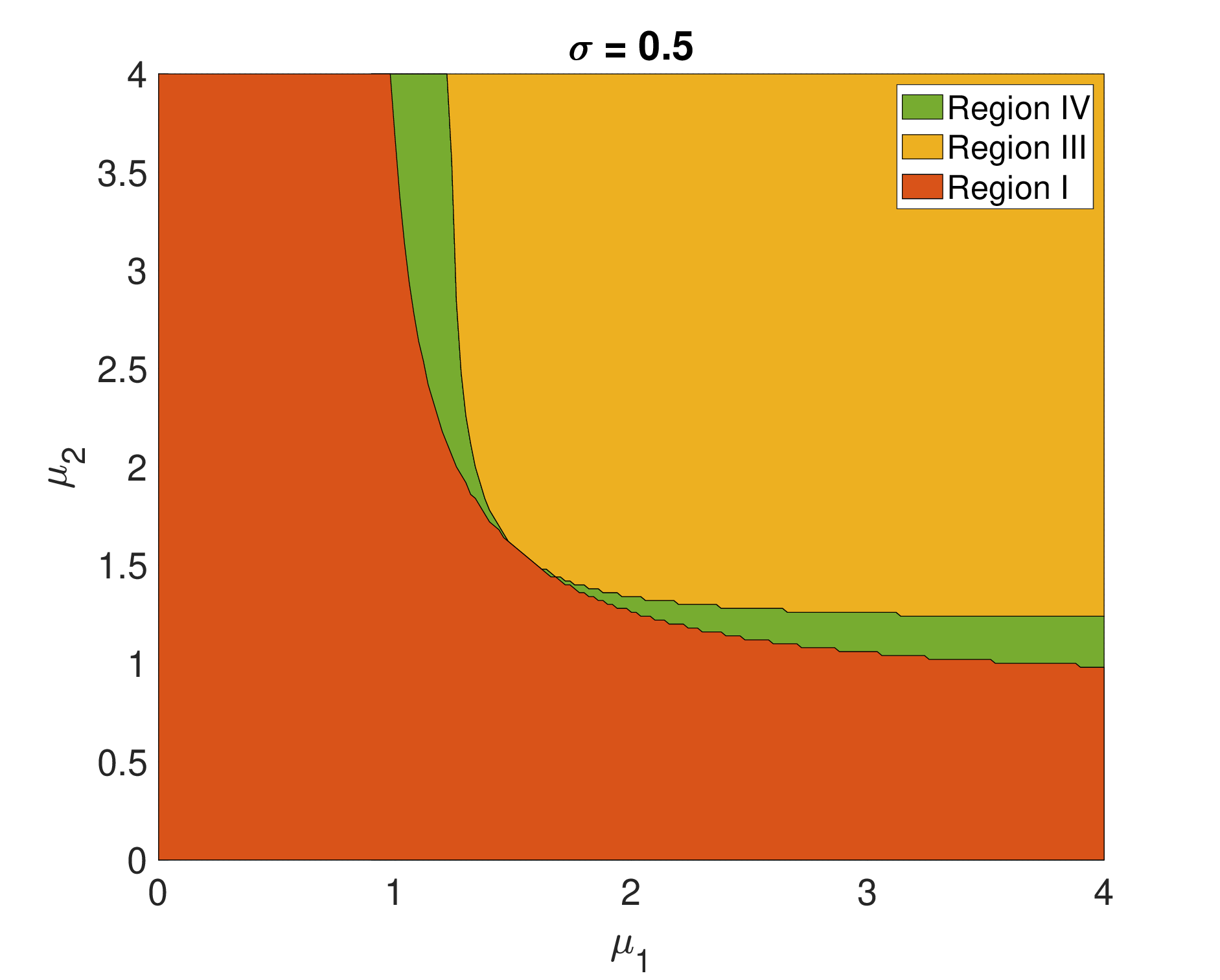}
    \caption{\small Asymmetric Gaussian channel gains with standard deviation equaling $0.25$. The standard deviations of both channel gains are fixed. In the most right top region (Region III), the sum capacity is achievable with $\gamma = \mu_1/\mu_2$. If we can choose the best $\gamma$, this region will become larger as Region IV + Region III. But the difference between them is small.}
    \label{fig_ergodic_asymmetric_gaussian_fixed_var}
\end{figure}

Still in the case where $\rho_1$ and $\rho_2$ are Gaussian with different distributions, we fix the means of both channel gains and investigate how their standard deviations affect the sum capacity achievability. Fig.~\ref{fig_ergodic_asymmetric_gaussian_fixed_mean} shows how their standard deviations affect the sum capacity achievability when $\mu_1=\mu_2 = 2$. Region I refers to the cases where the sum capacity is not achievable with any choice of $\gamma$. Region III stands for the cases where the sum capacity is achievable when $\gamma = \mu_1/\mu_2$. Region IV + III represents the cases where the sum capacity is achievable when $\gamma$ is chosen optimally. Since Region IV is quite small, $\gamma = \mu_1/\mu_2$ is a good suboptimal choice in this case. It becomes optimal when $\sigma_1=\sigma_2$ since the gap vanishes.
\begin{figure}[!ht]
    \centering
    \includegraphics[width = 0.5\linewidth]{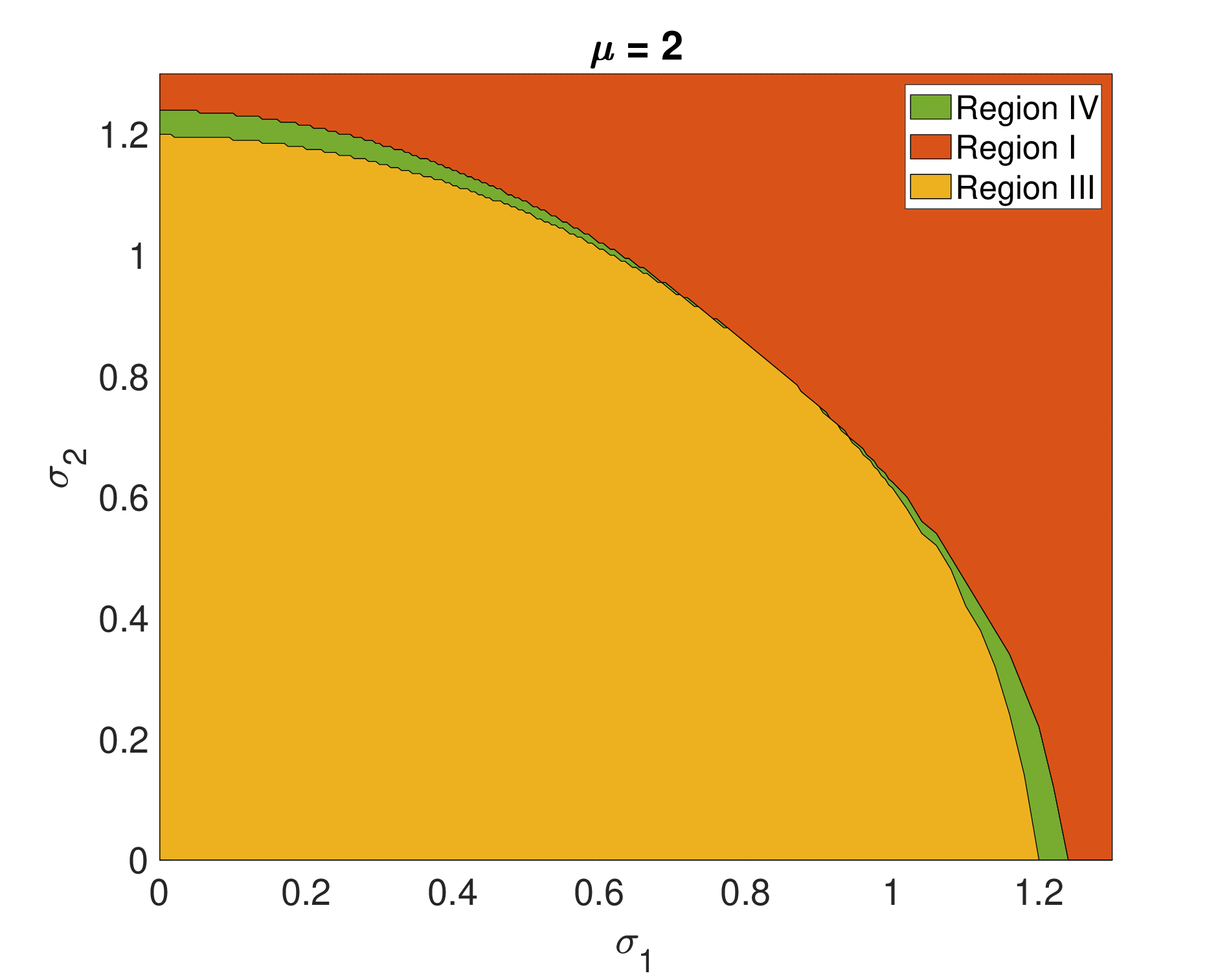}
    \caption{\small Asymmetric Gaussian channel gains with mean equals to $2$. The means of both channel gains are fixed. In the most left bottom region (Region III), the sum capacity is achievable with $\gamma = \mu_1/\mu_2$. If we can choose the best $\gamma$, this region will become larger as Region IV + Region III. But the gap is small.}
    \label{fig_ergodic_asymmetric_gaussian_fixed_mean}
\end{figure}

\section{Conclusion}
In this paper, we study the capacity achievability of CFMA in the two-user fading MAC. It is found that the channel statistics heavily affect the capacity achievability. In contrast to the fixed channel case where the entire capacity region is achievable with large enough transmission power, it is not the case in the fading scenario if the channel variance is large compared to its mean. We focus on the equation coefficients $\textbf{a} = (1,1)$ and $\textbf{b} = (0,1)$ or $(1,0)$, which is strongly suggested to be the best decoding coefficient choice by the numerical examples. We see for many channel statistics, CFMA can achieve a large range of points on the capacity boundary However, we can still see that the sum capacity cannot be achieved for some channel statistics. In that case, we can apply SIC ($\textbf{a} = (1,0),\textbf{b} = (0,1)$ and $\textbf{a} = (0,1),\textbf{b} = (1,0)$) to achieve both corner points of the capacity pentagon to achieve the sum capacity.


\bibliographystyle{IEEEtran}
\bibliography{reference,IEEEabrv}

\end{document}